\documentclass[twocolumn,11pt]{article}


\usepackage[utf8]{inputenc}
\usepackage[english]{babel}
\usepackage[T1]{fontenc}
\usepackage{amsmath,amssymb,amsthm}
\usepackage{mathtools}
\usepackage{hyperref}
\usepackage{float}
\usepackage[mathscr]{euscript}

\usepackage{mathrsfs}
\usepackage{dsfont}
\usepackage{setspace}
\usepackage[export]{adjustbox}
\usepackage{makecell}

\usepackage{enumitem}
\usepackage{color}


\usepackage{multirow}

\usepackage{tikz}
\usetikzlibrary{shapes.geometric,arrows.meta, chains,positioning,patterns,decorations.pathreplacing}
\usepackage{subcaption}


\usepackage{algorithm}
\usepackage{algpseudocodex}
\usepackage{setspace}

\newtheorem{theorem}{Theorem}
\newtheorem*{theorem*}{Theorem}
\newtheorem{lemma}[theorem]{Lemma}

\newtheorem{claim}[theorem]{Claim}

\usepackage{thmtools}
\usepackage{thm-restate}


\usepackage[sort&compress,numbers]{natbib}


\usepackage{mystyle}

\usepackage{authblk}

\title{Addressing stopping failures for small set flip decoding of hypergraph product codes}
\author[1]{Lev Stambler}
\affil[1]{University of Maryland, MD, 20742, USA}
\author[2]{Anirudh Krishna}
\affil[2]{Department of Computer Science \& Stanford Institute for Theoretical Physics, Stanford, CA, 94305, USA}
\author[3]{Michael E. Beverland}
\affil[3]{Microsoft Quantum, Redmond, WA 98052, USA}

\begin{document}

\twocolumn[
\begin{@twocolumnfalse}
\maketitle

\begin{abstract}
  For a quantum error correcting code to be used in practice, it needs to be equipped with an efficient decoding algorithm, which identifies corrections given the observed syndrome of errors.
  Hypergraph product codes are a promising family of constant-rate quantum LDPC codes that have a linear-time decoding algorithm called Small-Set-Flip ($\ssf$) (Leverrier, Tillich, Z\'emor FOCS 2015).
  The algorithm proceeds by iteratively applying small corrections which reduce the syndrome weight.
  Together, these small corrections can provably correct large errors for sufficiently large codes with sufficiently large (but constant) stabilizer weight.
  However, this guarantee does not hold for small codes with low stabilizer weight.
  In this case, $\ssf$ can terminate with stopping failures, meaning it encounters an error for which it is unable to identify a small correction.
  We find that the structure of errors that cause stopping failures have a simple form for sufficiently small qubit failure rates.
  We propose a new decoding algorithm called the Projection-Along-a-Line ($\pal$) decoder to supplement $\ssf$ after stopping failures.
  Using $\ssf+\pal$ as a combined decoder, we find an order-of-magnitude improvement in the logical error rate.
\end{abstract}

\date{}
\vspace{0.5cm}
\end{@twocolumnfalse}]


\section{Motivation and main results}
\vspace{-0.5cm}
Finite-rate quantum low-density parity-check (LDPC) codes are compelling candidates for building quantum memories.
The \emph{low density} aspect refers to properties that enable syndrome extraction for these codes to involve each qubit in only a constant number of two-qubit gates, independent of the code size.
This limits the spreading of errors which helps ensure that logical errors can be suppressed exponentially as a function of the size of the code.
Simultaneously, by virtue of their \emph{finite rate}, the number of physical qubits needed to encode each logical qubit in these codes is independent of the code's size.
This offers a considerable benefit over the popular geometrically local code families such as surface codes and color codes which have vanishing rate \cite{kovalev2012improved,gottesman2014fault,bravyi2010tradeoffs,baspin2022connectivity,baspin2022quantifying,baspin2023improved}.

There is a plethora of constructions of constant-rate quantum LDPC codes, including those based on hyperbolic manifolds \cite{freedman2002z2,guth2014quantum,londe2017golden}, the hypergraph product \cite{tillich2014quantum,zeng2019higher}, high-dimensional expanders \cite{evra2020decodable,kaufman2021new} and algebraic topology \cite{hastings2021fiber,breuckmann2021balanced,lin2022good}.
Recently, a number of long sought-after `good' quantum codes have been discovered \cite{panteleev2022asymptotically,leverrier2022quantum}, for which both the number of logical qubits and the distance scale linearly with the number of physical qubits.
In this work, we focus on the hypergraph product code (HGP) construction, which uses pairs of constant-rate \emph{classical} LDPC codes to build constant-rate \emph{quantum} LDPC codes.
While the asymptotic properties of HGP codes fall short of some of the newer families, the HGP construction is arguably the simplest method to build code families with practically reasonable parameters and is the starting point for many of the other constructions.

\begin{figure*}[t]
    \centering
    \begin{tikzpicture}

\end{tikzpicture}
    
\begin{tikzpicture}
    






    \begin{scope}[scale=0.5,xshift=175]
    \fill[pattern=north west lines] (3.2,5.7) rectangle (3.7,6.2);
    \fill[pattern=north west lines] (5.2,5.2) rectangle (5.7,5.7);
    \fill[pattern=north west lines] (3.7,3.7) rectangle (4.2,4.2);
    \fill[pattern=north west lines] (4.2,4.7) rectangle (4.7,5.2);
    \fill[pattern=north west lines] (4.7,3.7) rectangle (5.2,4.2);
    \draw (0,0) rectangle (3,3);
    \node at (-1.5,1.5) {$C_1$};
    \draw (0,3.2) rectangle (3,6.2);
    \node at (-1.5,4.7) {$V_1$};
    \draw (3.2,0) rectangle (6.2,3);
    \node at (1.5,-1.5) {$V_2$};
    \draw (3.2,3.2) rectangle (6.2,6.2);
    \node at (4.7,-1.5) {$C_2$};
    
    \foreach \x in {3.2,3.7,4.2,4.7,5.2,5.7} {
    \foreach \y in {3.2,3.7,4.2,4.7,5.2,5.7} {
        \draw ({\x},\y) rectangle ({\x+0.5},{\y+0.5});
     }
    }
    \foreach \x in {0,0.5,1,1.5,2,2.5} {
    \foreach \y in {3.2,3.7,4.2,4.7,5.2,5.7} {
        \draw ({\x},\y) rectangle ({\x+0.5},{\y+0.5});
     }
    }
    \foreach \x in {3.2,3.7,4.2,4.7,5.2,5.7} {
    \foreach \y in {0,0.5,1,1.5,2,2.5} {
        \draw ({\x},\y) rectangle ({\x+0.5},{\y+0.5});
     }
    }
    \foreach \x in {0,0.5,1,1.5,2,2.5} {
    \foreach \y in {0,0.5,1,1.5,2,2.5} {
        \draw ({\x},\y) rectangle ({\x+0.5},{\y+0.5});
     }
    }

    \foreach \y in {0,0.5,1,1.5,2,2.5} {
        \draw (-1,\y) rectangle (-0.5,{\y+0.5});
     }

     \foreach \y in {3.2,3.7,4.2,4.7,5.2,5.7} {
        \draw (-1,\y) rectangle (-0.5,{\y+0.5});
     }

     \foreach \x in {0,0.5,1,1.5,2,2.5} {
        \draw (\x,-0.5) rectangle ({\x+0.5},-1);
     }

     \foreach \x in {3.2,3.7,4.2,4.7,5.2,5.7} {
        \draw (\x,-0.5) rectangle ({\x+0.5},-1);
     }

    \node at (3,-2.5) {After random $Z$ error};
     \node at (3,-3.25) {(unstructured syndrome)};
     \draw[->] (7,3) --node[above] {$\ssf$} (10,3);
     \end{scope}
    \begin{scope}[scale=0.5,xshift=550]
    \foreach \x in {3.7,4.2,4.7,5.2} {
        \fill[pattern=north west lines] (\x,3.7) rectangle ({\x+0.5},4.2);
     }
    \draw (0,0) rectangle (3,3);
    \node at (-1.5,1.5) {$C_1$};
    \draw (0,3.2) rectangle (3,6.2);
    \node at (-1.5,4.7) {$V_1$};
    \draw (3.2,0) rectangle (6.2,3);
    \node at (1.5,-1.5) {$V_2$};
    \draw (3.2,3.2) rectangle (6.2,6.2);
    \node at (4.7,-1.5) {$C_2$};
    
    \foreach \x in {3.2,3.7,4.2,4.7,5.2,5.7} {
    \foreach \y in {3.2,3.7,4.2,4.7,5.2,5.7} {
        \draw ({\x},\y) rectangle ({\x+0.5},{\y+0.5});
     }
    }
    \foreach \x in {0,0.5,1,1.5,2,2.5} {
    \foreach \y in {3.2,3.7,4.2,4.7,5.2,5.7} {
        \draw ({\x},\y) rectangle ({\x+0.5},{\y+0.5});
     }
    }
    \foreach \x in {3.2,3.7,4.2,4.7,5.2,5.7} {
    \foreach \y in {0,0.5,1,1.5,2,2.5} {
        \draw ({\x},\y) rectangle ({\x+0.5},{\y+0.5});
     }
    }
    \foreach \x in {0,0.5,1,1.5,2,2.5} {
    \foreach \y in {0,0.5,1,1.5,2,2.5} {
        \draw ({\x},\y) rectangle ({\x+0.5},{\y+0.5});
     }
    }

    \foreach \y in {0,0.5,1,1.5,2,2.5} {
        \draw (-1,\y) rectangle (-0.5,{\y+0.5});
     }

     \foreach \y in {3.2,3.7,4.2,4.7,5.2,5.7} {
        \draw (-1,\y) rectangle (-0.5,{\y+0.5});
     }

     \foreach \x in {0,0.5,1,1.5,2,2.5} {
        \draw (\x,-0.5) rectangle ({\x+0.5},-1);
     }

     \foreach \x in {3.2,3.7,4.2,4.7,5.2,5.7} {
        \draw (\x,-0.5) rectangle ({\x+0.5},-1);
     }

     \node at (3,-2.5) {After $\ssf$ halt failure};
     \node at (3,-3.25) {(structured syndrome)};
     \draw[->] (7,3) --node[above] {$\pal$} (10,3);
     \end{scope}
     \begin{scope}[scale=0.5,xshift=925]
    \draw (0,0) rectangle (3,3);
    \node at (-1.5,1.5) {$C_1$};
    \draw (0,3.2) rectangle (3,6.2);
    \node at (-1.5,4.7) {$V_1$};
    \draw (3.2,0) rectangle (6.2,3);
    \node at (1.5,-1.5) {$V_2$};
    \draw (3.2,3.2) rectangle (6.2,6.2);
    \node at (4.7,-1.5) {$C_2$};
    
    \foreach \x in {3.2,3.7,4.2,4.7,5.2,5.7} {
    \foreach \y in {3.2,3.7,4.2,4.7,5.2,5.7} {
        \draw ({\x},\y) rectangle ({\x+0.5},{\y+0.5});
     }
    }
    \foreach \x in {0,0.5,1,1.5,2,2.5} {
    \foreach \y in {3.2,3.7,4.2,4.7,5.2,5.7} {
        \draw ({\x},\y) rectangle ({\x+0.5},{\y+0.5});
     }
    }
    \foreach \x in {3.2,3.7,4.2,4.7,5.2,5.7} {
    \foreach \y in {0,0.5,1,1.5,2,2.5} {
        \draw ({\x},\y) rectangle ({\x+0.5},{\y+0.5});
     }
    }
    \foreach \x in {0,0.5,1,1.5,2,2.5} {
    \foreach \y in {0,0.5,1,1.5,2,2.5} {
        \draw ({\x},\y) rectangle ({\x+0.5},{\y+0.5});
     }
    }

    \foreach \y in {0,0.5,1,1.5,2,2.5} {
        \draw (-1,\y) rectangle (-0.5,{\y+0.5});
     }

     \foreach \y in {3.2,3.7,4.2,4.7,5.2,5.7} {
        \draw (-1,\y) rectangle (-0.5,{\y+0.5});
     }

     \foreach \x in {0,0.5,1,1.5,2,2.5} {
        \draw (\x,-0.5) rectangle ({\x+0.5},-1);
     }

     \foreach \x in {3.2,3.7,4.2,4.7,5.2,5.7} {
        \draw (\x,-0.5) rectangle ({\x+0.5},-1);
     }

     \node at (3,-2.5) {After $\pal$ decoder};
     \node at (3,-3.25) {(trivial syndrome)};
     \end{scope}
\end{tikzpicture}
    \caption{
    Two-stage decoding of HGP codes.
    The HGP code has a Tanner graph $\cG$ formed from the Cartesian product of the Tanner graphs of two classical codes with bit nodes $C_1, C_2$ and check nodes $V_1, V_2$.
    Vertices of $\cG$ are represented by small squares, with qubits associated with $V_1 \times V_2$ and $C_1 \times C_2$ vertices, while $Z$ and $X$ type stabilizer generators are associated with $C_1 \times V_2$ and $V_1 \times C_2$ vertices respectively.
    In the first stage, the $\ssf$ decoder is applied, and if a valid correction is output then decoding is finished.
    If $\ssf$ experiences a stopping failure (as in the example depicted), we apply a second decoding stage using the \emph{Projection Along a Line} decoder ($\pal$).
    Our design of $\pal$ exploits the structure of the stopping syndrome which remains after an $\ssf$ stopping failure, which inherits properties of the classical codes.}
    \label{fig:decoding-schematic}
\end{figure*}
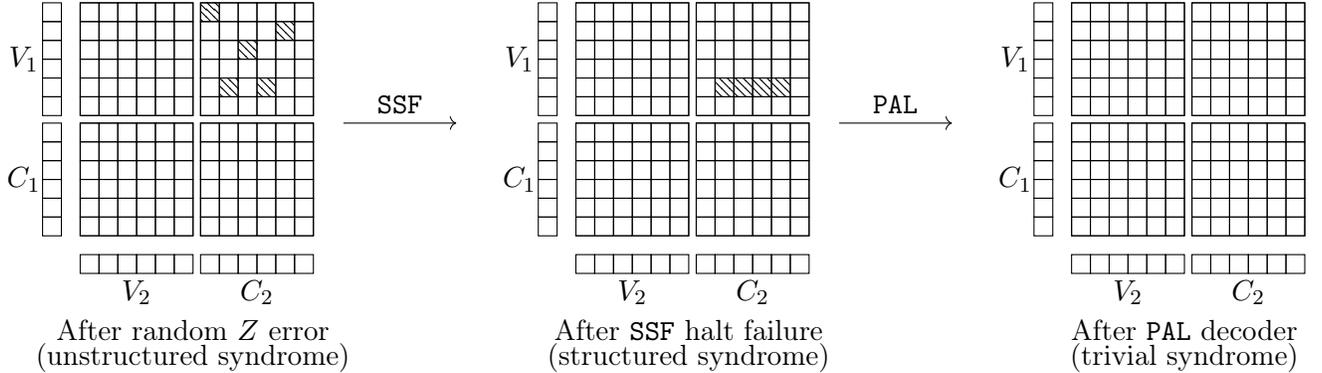

\emph{Decoding algorithms} play a crucial role in quantum error correction. 
These classical algorithms identify a correction from the information output by syndrome extraction.
Arguably the most famous example of a decoding algorithm for quantum LDPC codes is \ttt{matching} for surface codes \cite{kitaev1997quantum,bravyi1998quantum}, which is based on Edmonds' graph matching algorithm.
This algorithm relies on geometric locality, as do many other decoding algorithms that apply to various geometrically-local codes such as the renormalization-group decoder for topological codes \cite{duclos2010fast} and the projection decoder for color codes \cite{delfosse2014decoding}.
The Union Find decoder \cite{delfosse2021almost} notably applies to many codes including non-geometrically local ones \cite{delfosse2021union,delfosse2022toward}, but is only known to have a threshold for geometrically-local codes.
However, constant-rate quantum LDPC codes are not geometrically local~\cite{bravyi2010tradeoffs,baspin2022connectivity,baspin2022quantifying,baspin2023improved}, necessitating different properties such as graph expansion to form the basis of their decoders.

We focus our attention on Small-Set-Flip ($\ssf$), which is a linear-time decoding algorithm for HGP codes introduced by Leverrier, Tillich and Z\'emor~\cite{leverrier2015quantum}.
With adversarial Pauli errors, 
$\ssf$ has been proven to efficiently correct errors of weight up to a constant-factor of the code distance~\cite{leverrier2015quantum}.
With stochastic Pauli errors with perfect syndrome measurements, which is the noise model we consider in this work, $\ssf$ has a finite threshold~\cite{fawzi2018efficient}.
Moreover, a finite threshold still exists when stochastic syndrome errors are added (in which case decoding can be performed in a single-shot manner \cite{fawzi2018constant}).
For these reasons, $\ssf$ seems very promising for practical applications, motivating further improvements and optimizations to be considered, particularly in parameter regimes of small code sizes and stabilizer weights that are relevant for use in quantum memories.

In this work, we present improvements to the Small-Set-Flip ($\ssf$) \cite{leverrier2015quantum} decoding algorithm for HGP codes.
Our primary contribution is to address error configurations for which $\ssf$ cannot find a correction matching the observed syndrome, which we call a \emph{stopping} failure.
The $\ssf$ algorithm runs iteratively, reducing the syndrome weight in a sequence of rounds by applying partial corrections. 
Typically when $\ssf$ terminates in a stopping failure, it has already undergone many rounds and has made partial progress in correcting the error.
Moreover, unlike the syndrome of the initial stochastic error introduced by the noise model, we find that the \emph{stopping syndrome} remaining after $\ssf$ halts tends to be very structured.
We exploit this structure to design the Projection Along a Line ($\pal$) decoder that can typically correct the stopping set output by $\ssf$ when it results in stopping failures; see \fig{decoding-schematic}.

We note that generalizations of $\ssf$ \cite{leverrier2022efficient,leverrier2022quantum,leverrier2022parallel,gu2022efficient} have been used to design decoders for other families of quantum LDPC codes, including those which have both constant rate and constant relative distance~\cite{panteleev2022asymptotically, leverrier2022quantum}.
Consequently, we believe that many of the proposals described in this paper can generalized to improve decoders for other codes as well.

The structure of the remainder of this paper is as follows.
In \sec{background}, we set notation and review relevant background on error correction, decoders, HGP codes and specify how we perform numerical performance estimates. 
In \sec{pal}, we analyze stopping failures of the $\ssf$ decoder, motivate and define the $\pal$ decoding algorithm and analyze its performance and time complexity.
For instance, we find that for a HGP code with 3600 physical qubits subjected to 1\% physical error rate with perfect check measurements, the logical error rate falls from roughly $5 \times 10^{-2}$ to $4 \times 10^{-4}$ when moving from the $\ssf$ decoder to the $\ssf$ decoder followed by $\pal$.
We also discuss a number of approaches to improve the implementation run time of $\ssf$ in \app{space-time-cost}.

\paragraph{Related work:}
Combining $\ssf$ with other decoding algorithms has been explored before.
In \cite{Grospellier2021}, $\ssf$ was applied following Belief Propagation ($\bp$); the combination was referred to as $\bp+\ssf$.

Connolly \emph{et al}.\ \cite{connolly2022fast} design a decoding algorithm that also exploits the product structure of HGP codes.
Their decoding algorithm reduces the problem of decoding erasure errors on HGP codes to that of decoding erasure errors on a classical code.
As a result, their erasure decoding algorithm has complexity $O(N^2)$ in contrast to $O(N^3)$ for generic codes.
In Ref.~\cite{connolly2022fast}, the authors propose one way to apply their approach to the setting of stochastic Pauli noise, namely by using it as a subroutine for the Union Find decoder~\cite{delfosse2021almost}.
However, the Union Find decoder is not known to have a threshold for HGP codes.
Our work can be seen as another approach to generalize the work of Connolly \emph{et al}.\ to the case of stochastic Pauli noise, but in combination with the SSF decoder, which is known to have a threshold for HGP codes.

\section{Background and notation}
\label{sec:background}

In this section we review some important background and fix the notation we use in the rest of the paper.

\subsection{Error correction}
A classical $[n,k,d]$ error correcting code $\cC \subseteq \bbF_2^n$ is a $k$-dimensional subspace such that every element has Hamming weight at least $d$.
A code $\cC$ can be specified by a parity check matrix $\pcm \in \bbF_2^{m \times n}$ ---the code is the kernel of $\pcm$.
Given an error $\cerr \in \bbF_2^n$, we let $\synd = \pcm \cerr$ denote its syndrome.
A family of codes $\{\cC_n\}$ is said to be a low-density parity check (LDPC) family if each row and column of the parity check matrix for each instance has weight upper bounded by a constant independent of $n$.
Equivalently, the code can be specified by its Tanner graph $G = (V \sqcup C,E)$, a bipartite graph where each node $v \in V$ represents a bit and each node $c \in C$ represents a parity check;
$(v,c) \in E$ if the bit $v$ is in the support of the check $c$.

Given a Tanner graph $G = (V \sqcup C,E)$,
for two sets $S \subseteq V$ and $T \subseteq C$, we let $E(S,T)$ denote the edges between $S$ and $T$.
For $S \subseteq V$ and $T \subseteq C$, we let $\Gamma(S)$ and $\Gamma(T)$ denote the neighborhood, i.e.\ the set of nodes adjacent to $S$ and $T$:
\begin{align*}
    \Gamma(S) &= \set{c | E(S,\{c\})  \neq \emptyset }~,\\
    \Gamma(T) &= \set{v | E(\{v\},T) \neq \emptyset }~.
\end{align*}
We let $\tC$ represent the $[m,\tk,\td]$ code obtained from $G$ by exchanging the role of the bits and parity checks, which we call the \emph{dual code}.

A \emph{stabilizer group} over $N$ qubits is an Abelian subgroup $\cS \subseteq \cP_N$ of the $N$-qubit Pauli group and can be expressed using $M$ independent stabilizer generators
\begin{equation}
    \cS = \langle S_{i} : i \in \{1,...,M\} \rangle~.
\end{equation}
The stabilizer group defines a \emph{stabilizer code} $\cQ$, which is the space of joint $+1$-eigenstates of the generators,
\begin{equation}
    \cQ = \{ \ket{\psi} : S \ket{\psi} = \ket{\psi} \; \forall \; S \in \cS \}~.
\end{equation}
The (log-)dimension $K$ of this space is equal to $N-M$ and corresponds to the number of qubits that can be encoded.

The set of \emph{logical operators} $\cL$ is the subset of Pauli operators that commute with all the stabilizer generators,
\begin{equation}
    \cL = \{\ssL : [\ssL, \ssS] = 0 \; \forall \; S \in \cS \}~.
\end{equation}
The \emph{distance} $D$ of the stabilizer code is the weight of the smallest $L \in \cL$ such that $L$ is not in the stabilizer group $\cS$ up to a phase.

A family of stabilizer codes $\{\cQ_N\}$ is said to be a family of \emph{quantum LDPC codes} if, for all elements $\cQ_N$, each qubit is acted on by at most $\Delta_q$ stabilizer generators and each stabilizer generator acts on at most $\Delta_g$ qubits.
We then say that $\{\cQ_N\}$ is an $\dsl N,K(N),D(N), \Delta_q, \Delta_g\dsr$ quantum LDPC code family.
We say the family has \emph{finite rate} if $K(N) = \Theta(N)$.

The quantum codes we consider in this paper are CSS codes~\cite{calderbank1996good,steane1996multiple}, an important class of stabilizer codes for which each stabilizer generator is expressed as either a purely $X$ type or a purely $Z$ type Pauli operator.
This then means that the larger decoding task can be broken down into two separate tasks.
In this setting, it is convenient to represent the error as a pair of subsets of qubits corresponding to the $X$ and $Z$ support.
We focus on correcting only $Z$-type errors, and precisely the same approach can be used to correct $X$-type errors.
Moreover, both the codes and the noise that we consider are symmetric under the exchange of $X$ and $Z$, such that the details of any analysis of the correction of $Z$ errors also applies directly to the correction of $X$ errors.

Given a $Z$ error specified by its support $E \subseteq [N]$, the \emph{syndrome} $\synd(E)$ is a bit string that records the outcomes of a complete set of generators of the stabilizer group. 
Each bit in $\synd(E)$ corresponds to a generator $S \in \cS$, and is $0$ if $Z(E)$ and $S$ commute, and is $1$ otherwise.

\subsection{Hypergraph product codes}
\label{sec:hgp}

Let $G = (V \union C, E)$ be a bipartite graph.
We assume that $G$ is connected, i.e.\ there is a path from any vertex to any other vertex in $G$.
We assume $G$ is biregular, i.e.\ all vertices in $V$ and $C$ have degrees $\Delta_V$ and $\Delta_C$ respectively.

Let $G_1 = (V_1 \union C_1,E_1)$ and $G_2 = (V_2 \union C_2, E_2)$ represent two copies of $G$.
Let $\cG = G_1 \times G_2$ denote the Cartesian product of $G$ with itself.
The quantum code is defined with respect to $\cG$ as follows:
\begin{itemize}
\item The set of physical qubits is associated with the set $V_1 \times V_2 \sqcup C_1 \times C_2$.

\item The set of $Z$ type stabilizer generators, denoted $\cZ$, is associated with the set $C_1 \times V_2$.
Therefore given $c_1 \in C_1$ and $v_2 \in V_2$, generator $(c_1,v_2)$ is supported on qubits $\Gamma(c_1) \times \{v_2\} \sqcup \{c_1\} \times \Gamma(v_2)$.
This set is called the \emph{support} of the generator $(c_1,v_2)$ and denoted $\supp(c_1,v_2)$.

\item The set of $X$ type stabilizer generators, denoted $\cX$, is associated with the set $V_1 \times C_2$.
Therefore given $v_1 \in V_1$ and  $c_2 \in C_2$, generator $(v_1,c_2)$ is supported on qubits $\{v_1\} \times \Gamma(c_2) \sqcup \Gamma(v_1) \times \{c_2\}$.
\end{itemize}

It is straightforward to see that the stabilizers commute and therefore define a valid quantum code.
Consider a $Z$ type stabilizer generator $(c_1,v_2)$, which has support on qubits $\Gamma(c_1) \times \{v_2\} \sqcup \{c_1\} \times \Gamma(v_2)$.
$(c_1,v_2)$ shares support with the set of $X$ type stabilizer generators $\Gamma(c_1) \times \Gamma(v_2)$.
This set of parity checks is also called the \emph{local view} of the generator $(c_1,v_2)$.

Take any one of these $X$ type stabilizer generators, say $(v_1,c_2)$ where $v_1 \in \Gamma(c_1)$ and $c_2 \in \Gamma(v_2)$, and note that this $X$ type stabilizer generator is supported on $\{v_1\} \times \Gamma(c_2) \sqcup \Gamma(v_1) \times \{c_2\}$. 
$(v,c)$ and $(c_1,v_2)$ therefore share support on precisely two qubits: $(v_1,v_2)$ and $(c_1,c_2)$ and
consequently commute.
A schematic for the code is presented in \fig{exp-code}.

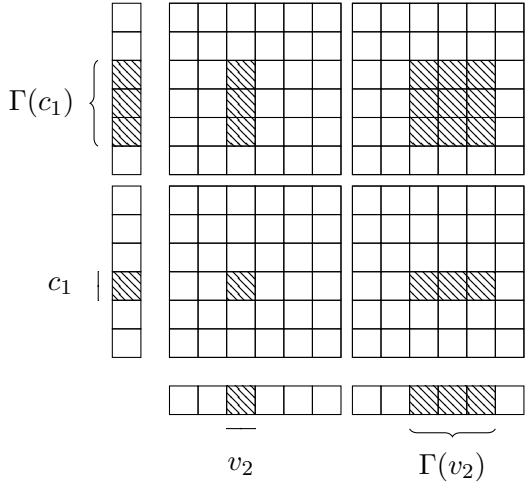
\begin{figure}[h]
    \centering
     \begin{tikzpicture}
        \begin{scope}[scale=0.76]
        \fill[pattern=north west lines] (-1,1) rectangle (-0.5,1.5);
        \foreach \y in {3.7,4.2,4.7} {
            \fill[pattern=north west lines] (-1,\y) rectangle (-0.5,{\y+0.5});
        }
        \fill[pattern=north west lines] (1,-1) rectangle (1.5,-0.5);
        \foreach \x in {4.2,4.7,5.2} {
            \fill[pattern=north west lines] (\x,-1) rectangle ({\x+0.5},-0.5);
        }
        \fill[pattern=north west lines] (1,1) rectangle (1.5,1.5);
        \foreach \y in {3.7,4.2,4.7} {
            \fill[pattern=north west lines] (1,\y) rectangle (1.5,{\y+0.5});
        }
        \foreach \x in {4.2,4.7,5.2} {
            \fill[pattern=north west lines] (\x,1) rectangle ({\x+0.5},1.5);
        }
        \foreach \x in {4.2,4.7,5.2} {
            \foreach \y in {3.7,4.2,4.7} {
                \fill[pattern=north west lines] (\x,\y) rectangle ({\x+0.5},{\y+0.5});
            }
        }
        
        \draw (0,0) rectangle (3,3);
        \draw (0,3.2) rectangle (3,6.2);
        \draw [decorate,decoration={brace,amplitude=3pt},yshift=0pt]
(-1.25,3.7) -- (-1.25,5.2) node [black,midway,xshift=-0.75cm] {$\Gamma(c_1)$};
        \draw (3.2,0) rectangle (6.2,3);
        \draw [decorate,decoration={brace,mirror,amplitude=-0.15pt},yshift=0pt]
(-1.25,1.5) -- (-1.25,1) node [black,midway,xshift=-0.5cm] {$c_1$};
        \draw (3.2,3.2) rectangle (6.2,6.2);
        \draw [decorate,decoration={brace,amplitude=-0.15pt},yshift=0pt]
(1.5,-1.25) -- (1,-1.25) node [black,midway,yshift=-0.5cm] {$v_2$};
        \draw [decorate,decoration={brace,amplitude=3pt,mirror},yshift=0pt]
(4.2,-1.25) -- (5.7,-1.25) node [black,midway,yshift=-0.5cm] {$\Gamma(v_2)$};
        
        \foreach \x in {3.2,3.7,4.2,4.7,5.2,5.7} {
        \foreach \y in {3.2,3.7,4.2,4.7,5.2,5.7} {
            \draw ({\x},\y) rectangle ({\x+0.5},{\y+0.5});
         }
        }
        \foreach \x in {0,0.5,1,1.5,2,2.5} {
        \foreach \y in {3.2,3.7,4.2,4.7,5.2,5.7} {
            \draw ({\x},\y) rectangle ({\x+0.5},{\y+0.5});
         }
        }
        \foreach \x in {3.2,3.7,4.2,4.7,5.2,5.7} {
        \foreach \y in {0,0.5,1,1.5,2,2.5} {
            \draw ({\x},\y) rectangle ({\x+0.5},{\y+0.5});
         }
        }
        \foreach \x in {0,0.5,1,1.5,2,2.5} {
        \foreach \y in {0,0.5,1,1.5,2,2.5} {
            \draw ({\x},\y) rectangle ({\x+0.5},{\y+0.5});
         }
        }

        \foreach \y in {0,0.5,1,1.5,2,2.5} {
            \draw (-1,\y) rectangle (-0.5,{\y+0.5});
         }

         \foreach \y in {3.2,3.7,4.2,4.7,5.2,5.7} {
            \draw (-1,\y) rectangle (-0.5,{\y+0.5});
         }

         \foreach \x in {0,0.5,1,1.5,2,2.5} {
            \draw (\x,-1) rectangle ({\x+0.5},-0.5);
         }

         \foreach \x in {3.2,3.7,4.2,4.7,5.2,5.7} {
            \draw (\x,-1) rectangle ({\x+0.5},-0.5);
         }
         \end{scope}
         \end{tikzpicture}
    \caption{Schematic for the support of a $Z$ type stabilizer generator $(c_1,v_2)$ for $c_1 \in C_1$ and $v_2 \in V_2$.}
    \label{fig:exp-code}
\end{figure}
The parameters of the HGP code are specified by the parameters of the codes $\cC$ and $\tC$ associated with $G$.
In particular, if $\cC$ or $\tC$ is constant rate, then so is the resulting quantum code.
Furthermore, if $G$ is a graph with bounded degree $\Delta_V$ and $\Delta_C$, then the quantum code is an LDPC code that obeys $\Delta_q \leq 2\max(\Delta_V,\Delta_C)$ and $\Delta_g \leq \Delta_V + \Delta_C$.

We see that HGP codes are CSS codes, and as such we consider only $Z$ type errors.
We refer to errors on $V_1 \times V_2$ as VV errors and errors on $C_1 \times C_2$ as CC errors.
We refer to $X$ type stabilizer generators as parity checks and $Z$ type stabilizer generators as generators.
If a parity check is adjacent to an odd number of (qu)bits with errors, then the parity check is said to be \emph{unsatisfied}; otherwise, it is satisfied.

We explore the structure of a local view of a generator in detail and establish notation as it is central for the decoding algorithm $\ssf$ (discussed in the next section).

\paragraph{The local view:}
Let $F \subseteq \supp(c_1,v_2)$ be a set of qubits.
Let $F_V$ be the restriction of $F \cap \Gamma(c_1) \times \{v_2\}$ on to the first component and $F_C$ be the restriction of $F \cap \{c_1\} \times \Gamma(v_2)$ on to the second component.
In other words, these are the restrictions of $F$ to the VV and CC qubits respectively.
We include a figure below to illustrate for a code constructed with a classical graph $G$ such that $\Delta_V = 3$ and $\Delta_C = 4$.
The sets $F_V$ and $F_C$ are denoted in green and blue respectively.
We let $F_V^c = \Gamma(c_1) \times \{v_2\} \setminus F_V$ and $F_C^c = \{c_1\} \times \Gamma(v_2) \setminus F_C$ denote their complements within the support of $(c_1,v_2)$.

\begin{center}
    \begin{tikzpicture}
    
    \fill[fill=green!20] (0,1.5) rectangle (0.5,2);
    \fill[fill=green!20] (0,2) rectangle (0.5,2.5);
    \fill[fill=blue!20] (1,0) rectangle (1.5,0.5);
    \fill[red!20] (1.5,1.5) rectangle (2,2);
    \fill[red!20] (1.5,2) rectangle (2,2.5);
    \fill[red!20] (1,1.5) rectangle (1.5,2);
    \fill[red!20] (1,2) rectangle (1.5,2.5);
    \fill[blue!20] (1,1) rectangle (1.5,1.5);
    \fill[blue!20] (1,2.5) rectangle (1.5,3);
    \fill[green!20] (2,1.5) rectangle (2.5,2);
    \fill[green!20] (2,2) rectangle (2.5,2.5);    
    \foreach \x in {1,1.5,2} {
    \foreach \y in {1,1.5,2,2.5} {
        \draw (\x,\y) rectangle ({\x+0.5},{\y+0.5});
    }
    }
    \foreach \x in {1,1.5,2} {
        \draw (\x,0) rectangle ({\x+0.5},0.5);
    }
    \foreach \y in {1,1.5,2,2.5} {
        \draw (0,\y) rectangle (0.5,{\y+0.5});
    }
    \draw (0.5,0.5) rectangle (0,0);
    
    \node[rotate=-45] at (-0.2,-0.2) {$(c_1,v_2)$};
    \node at (1.75,-0.5) {$\{c_1\} \times \Gamma(v_2)$};
    \node[rotate=90] at (-0.5,2) {$\Gamma(c_1) \times \{v_2\}$};
\end{tikzpicture}
\end{center}
The \emph{neighborhood} of a set $F \subseteq \supp(c_1,v_2)$, denoted $\Lambda(F)$, is the set of parity checks that is incident to at least one element of $F$.
The neighborhood can be expressed as the union $\Lambda(F) = F_V \times \Gamma(v_2) \union \Gamma(c_1) \times F_C$.
This corresponds to the union of the colored parity checks in the illustration above.

The \emph{unique} neighborhood of $F$, denoted $\Lambda^{(u)}(F)$, is the set of parity checks that is incident to one and only one element of $F$.
The unique neighborhood can be expressed as the disjoint union $\Lambda^{(u)}(F) = F_V \times F_C^C \sqcup F_V^c \times F_C$.
In the figure above, the unique neighborhood of $F$ is depicted in color corresponding to their unique neighbor---the neighbors of $F_V$ are in green and the neighbors of $F_C$ are in blue.
The parity checks in red are also called the multi-neighborhood of $F$; we do not introduce separate notation to refer to this set.

\subsection{Decoding algorithms}

A \emph{decoding algorithm} is a classical algorithm, which, given a syndrome bitstring $\synd$ as input, produces a correction consistent with $\synd$.
This corresponds to a bitstring $\ccorr \in \bbF_2^n$ and a Pauli operator $\widehat E$ for classical and quantum codes respectively.
We refer to these as corrections.

In the quantum case, given an error $E$, the decoder is passed the syndrome $\synd(E)$ and either \emph{halts} (it produces no output and declares failure), or outputs a correction $\widehat E$ with the same syndrome, i.e., $\synd(\widehat E) = \synd(E)$.
When the decoder produces an output, it succeeds if the net Pauli applied by both the error and the correction $Z(E) Z(\widehat E) \in \{ \pm i, \pm 1\} \cdot \cS$, and fails otherwise.

\noindent To recap notation we use throughout this paper:
\begin{itemize}
    \item $\cerr$, $\ccorr$ refers to classical errors interpreted as binary strings of length $n$.
    \item $E$, $\Ehat$ refers to subsets of $[N]$ corresponding to supports of Pauli operators.
    \item $\synd$ refers to the syndrome of both classical and quantum errors, e.g.\ $\synd(\cerr)$ or $\synd(E)$.
    \item $\oplus$ denotes the sum of vectors over $\bbF_2$.
    \item $\symmdiff$ denotes the symmetric difference of sets.
\end{itemize}

\subsubsection{Small Set Flip}
\label{sec:ssf}
Before describing $\ssf$ for quantum codes it is informative to briefly review the $\flip$ algorithm for classical codes which was introduced by Sipser and Spielman \cite{sipser1994expander}.
We consider a classical code $\cC$, defined in terms of a $(\Delta_V, \Delta_C)$-biregular bipartite Tanner graph $G = (V \union C,E)$.
Let $\cerr \in \bbF_2^n$ be an error and let $\synd$ be the corresponding syndrome.
The algorithm is only provided the syndrome $\synd$ and is initialized with $\ccorr = \mathbf{0}$.
$\flip$ is an iterative algorithm such that in each iteration, $\flip$ tries to find a vertex $v \in V$ to flip, i.e.\ it maps $\ccorr_v \mapsto \ccorr_v \oplus 1$, such that the syndrome weight decreases.
The algorithm terminates when no such vertex $v$ exists.
For the $\flip$ algorithm to succeed, when the algorithm terminates $\cerr = \ccorr$.
There are two modes of failure.
First, it can find a valid correction $\ccorr$ such that $\ccorr + \cerr$ is a logical error, which we call an error correction failure.
Second, it can terminate with a non-zero syndrome, a failure mode that we call a \emph{halt failure}.

To guarantee $\flip$ succeeds, we require that the bipartite graph $G$ that defines the code $\cC$ corresponds to an expander, i.e.\ for all $S \subseteq V$,
\begin{align}
     |S| \leq \alpha n \implies |\Gamma(S)| \geq (1-\epsilon)\Delta_V|S|~,
\end{align}
where $0 < \alpha < 1$ and $0 < \epsilon < 1/4$ are parameters of $G$.
The algorithm can correct errors $\cerr$ whose weight is a constant fraction of the code distance.

Small Set Flip ($\ssf$) is an iterative decoding algorithm for quantum codes that generalizes $\flip$.
In the iteration indexed by $i$, it applies an update to the correction and correspondingly updates the syndrome to $\synd^{(i)}$.
When it can no longer update the syndrome, it terminates.

For any $F \subseteq [N]$, we define the $\genscore$ function as
\begin{align}
\label{eq:score}
    \genscore(F) = \left\{\frac{|\synd^{(i)}| - |\synd^{(i)} \oplus \synd(F)|}{|F|} \right\}~,
\end{align}
where $\oplus$ denotes the sum mod $2$.
In the special case that $F$ is empty, we define $\genscore(F) = 0$.

The score of a set $F$ captures how good a partial $Z$ type correction $F$ is.
The numerator of the score function corresponds to the change in the syndrome weight that would be caused by flipping $F$.
The denominator, on the other hand, assigns a higher score to $F$ if its weight is lower.
In particular, if we have two candidates $F$ and $F'$ that cause the same change in syndrome but obey $|F| < |F'|$, then $\score(F) > \score(F')$.

Let $\cF$ denote the union of all subsets $F$ contained in the support of any $Z$ type generator:
\begin{align}
    \cF = \{F : F \subseteq \supp(S), S \in \cZ \}~.
\end{align}

$\ssf$, as specified in Algorithm~\ref{alg:ssflip}, is an iterative algorithm where in each iteration, the \emph{score} is evaluated for each element $F \in \cF$, and then a set with the maximum score is applied as a partial correction (if positive), which reduces the syndrome weight.

\begin{algorithm}[h]
	\begin{algorithmic}[0]
		\State \textbf{Input:} A syndrome $\synd(E) \in \bbF_2^{M}$.
		\State \textbf{Output:} Correction $\Ehat$
    \end{algorithmic}
    \begin{algorithmic}[1]\onehalfspacing
        \State $i \leftarrow 0$
        \State $\synd^{(0)} \leftarrow \synd(E)$
        \State $\Ehat \leftarrow \emptyset$
        \State $F_0 \leftarrow \argmax \genscore(F)$ over $F \in \cF$
		\While{$\genscore(F_i) > 0$}
            \State $\Ehat \leftarrow \Ehat \symmdiff F_i $
            \State $\synd^{(i+1)} \leftarrow \synd^{(i)} \oplus \synd(F_i)$
            \State $i \leftarrow i+1$
            \State $F_i \leftarrow \argmax \genscore(F)$ over $F \in \cF$
		\EndWhile
		\State \Return $\Ehat$.
	\end{algorithmic}
	\caption{$\ssf(\synd(E)$) (Informal)}
	\label{alg:ssflip}
\end{algorithm}

It may appear at first that this algorithm formally has run time $O(N \log N)$  rather than $O(N)$ since the list of scores needs to be stored and updated in order to pick out the maximum at each round, and comparison sorting requires time $\Omega(N \log N)$.
However, there is a finite set of allowed scores since the size of each element in $\cF$ is limited by the size of the largest stabilizer generator which also implies a finite set of allowed syndrome differences in each iteration.
Sorting a set of $N$ elements which only take values in a finite range only requires time $O(N)$ using \ttt{counting sort}~\cite{cormen01}.
In practice, this makes little difference in the run time of instances in the regimes we consider in this paper and we use a standard comparison-based sorting algorithm in our implementations.
Note that we provide a more formal version of Algorithm~\ref{alg:ssflip}, namely Algorithm~\ref{alg:ssflip-detailed} in \app{space-time-cost}, which includes further details of the data structures used.

While $\ssf$ is a linear-time algorithm, the dependence of the time complexity on the degree of the quantum LDPC code scales poorly.
For a specific generator $S \in \cZ$, evaluating the scores takes $\Theta(2^{|\supp(S)|})$ time since that generator contributes $2^{|\supp(S)|}$ elements to $\cF$.
For a fixed family of LDPC codes, this quantity is a constant as $\Delta_V$ and $\Delta_C$ are constants, but it can be large.

There are two modes of failure for the $\ssf$ decoding algorithm.
First, it can find a valid correction $\Ehat$ such that $Z(E)\cdot Z(\Ehat)$ is a logical error, which we call an error correction failure.
Second, it can terminate with a non-zero syndrome, a failure mode that we call a halt failure.
We count both as logical failures.
For a fixed code $\cQ$ in the code family, an error $E$ on which $\ssf$ cannot complete even one iteration is called a stopping set.
Stopping sets have been explored in the context of other iterative decoding algorithms and other quantum codes (e.g.\ \cite{raveendran2021trapping}).
When $\ssf$ terminates with a halt failure, the stopping sets correspond to stopping sets of $\ssf$; we refer to the corresponding syndromes as stopping syndromes.

\subsubsection{Belief Propagation and Ordered Statistics Decoding}

Here we provide high-level overviews of a Belief Propagation ($\bp$) decoder and an Ordered Statistics Decoder ($\osd$) for classical codes.
We use these decoders as subroutines for the $\pal$ decoder which we introduce in \sec{pal}.
For simulations in this paper, we use the open-source implementation of $\bp$ and $\osd$ in the software package by Joschka Roffe \cite{roffe_decoding_2020} with specific parameter settings as described in \app{details}.

The term `$\bp$ decoder' is really an umbrella term for a class of decoding algorithms on Tanner graphs, which calculate the marginal probability of an error on each bit given the observed syndrome.
Since the marginal probability of an error on a given bit depends on the probability of errors on other bits, information is iteratively passed between neighbors in the graph, to approximate the most probable distribution over bits.
This process ideally converges to a consistent probability distribution that can be used to infer a correction.
In this paper, we use the implementation from \cite{roffe_decoding_2020} which is a min-sum implementation of $\bp$, and henceforth refer to that implementation simply as $\bp$.

If the physical failure probability of each bit is decreased, we expect the logical failure probability to decrease as well.
However, $\bp$ can instead run into an `error floor'---where the logical failure rate plateaus rather than decreasing with further iterations.
This can be caused by feedback loops in $\bp$ due to the existence of cycles in the Tanner graph~\cite{poulin2008iterative}.

The Ordered Statistics Decoder~\cite{fossorier1995soft} is a classical decoder by Fossosier and Lin, which when generalized to the quantum setting can overcome the problem of feedback loops that are experienced by $\bp$~\cite{panteleev2021degenerate} and can perform well in practice for quantum codes~\cite{roffe2020decoding}.
Here we focus on $\osd$ as a classical decoder however.

The $\osd$ algorithm involves separating the $n$ bits of the code into two sets: a set $T$ of $n-m$ bits which we are confident have \emph{not} been flipped, and a set $S$ of $m$ bits which may or may not have been flipped. 
The set $S$ the code's $n \times m$ parity check matrix $\pcm$ restricted to columns in $S$ is invertible (i.e. the $m$ columns are linearly independent).
First, we run $\bp$ and the columns of $\pcm$ are sorted (say, from left to right) according to the confidence with which we believe they are flipped.
This results in a partition of $\pcm$ (up to column permutations) of the following form
\begin{align*}
    \pcm = (\pcm_S | \pcm_T)~,
\end{align*}
where the columns of $\pcm_S$ are linearly independent, i.e.\ $\pcm_S$ is invertible and all the bits in $S$ correspond to bits that are flipped.
In this ordering, we more strongly believe that the bit indexed by the $j$\textsuperscript{th} column of $\pcm$ has been flipped than the bit indexed by the $(j+1)$\textsuperscript{th} column of $\pcm$.
A solution $\ccorr$ to the equation $\pcm \cerr = \synd$ can be expressed as $\ccorr = (\ccorr_S|\ccorr_T)$.
As $\pcm_S$ is invertible, for every choice of $\ccorr_T$, there exists a unique choice of $\ccorr_S$ such that $\pcm \ccorr = \synd$.

As the columns of $\pcm_S$ are linearly independent, we can find a solution of the form $\ccorr^{(0)} = (\ccorr_S^{(0)}| \mathbf{0})$.
However, this solution need not be the most likely solution---there may exist a correction $\ccorr' = (\ccorr_S'| \ccorr_T')$ with lower weight than $\ccorr^{(0)}$.
To find the most likely solution, one iterates through every possible choice of $\ccorr_T$.
For a constant rate code, this algorithm requires exponential time as $\ccorr_T$ is supported on $n-m = \Theta(n)$ bits.
However, there exist efficient approximations that only use low weight strings $\ccorr_T$.
We use the variant $\bposd$-$\mathtt{CS}$ by Roffe \emph{et al}.\ \cite{roffe2020decoding} that we refer to simply as $\bposd$.
In this variant, all weight $2$ solutions in $T$ are considered.

This algorithm runs in $O(n^3)$ time for codes of size $n$, where the dominant cost is Gaussian elimination for the matrix inversion step.

\subsection{Numerical tests of decoder run time and error correction performance}

The HGP codes which we use for our examples are specified as follows.
Bipartite graphs with bounded degree corresponding to classical codes are constructed using the Progressive Edge Growth (PEG) algorithm \cite{hu2001progressive}.
The classical codes themselves correspond to $(3,4)$-biregular bipartite graphs; this results in quantum codes with $\Delta_q \leq 8$, $\Delta_g \leq 7$.

In this work we assume that $X$ and $Z$ errors are decoded independently, and since the codes are symmetric under the exchange of $X$ and $Z$ we consider only $Z$ errors.
We assume that each qubit independently undergoes a $Z$ error with probability $\pphys$, resulting in a $Z$ type error specified by a bitstring $E$, which has an $X$-type syndrome $\synd(E)$, which we assume is extracted perfectly.
We then run the implementation of the decoder on this syndrome, which either produces a proposed correction $\widehat E$, or experiences a halting failure.
If the combined effect of the proposed correction and the error, namely $Z(E) \cdot Z(\widehat E)$, is a $Z$ stabilizer, the decoder succeeds; otherwise, we say that a \emph{logical} failure has occurred.

We adhere to the following rules in order to take data for numerical comparisons:
\begin{enumerate}
    \item We present the average costs over all samples.
    \item We take sufficiently many samples such that the statistical uncertainty of failure probabilities appear only at the second significant figure.
\end{enumerate}

\section[Decoding after SSF halt failures]{Decoding after $\ssf$ halt failures}
\label{sec:pal}

\begin{table*}
    \centering
    \resizebox{\textwidth}{!}{%
    \begin{tabular}{|c||c|c|c||c|c|c||c|c|c|}
    \hline
    Error rate &  \multicolumn{3}{|c||}{$p = 1$\%}  & \multicolumn{3}{|c||}{$p = 2$\%} & \multicolumn{3}{|c|}{$p = 3$\%} \\
    \hline
    Code size & 900 & 1600 & 3600 & 900 & 1600 & 3600 &   900 & 1600 & 3600 \\
    \hline
    \hline
     halt fail & \multirow{2}{*}{0.999(5)} & \multirow{2}{*}{0.999(4)} & \multirow{2}{*}{0.999(2)} & \multirow{2}{*}{0.999(3)} & \multirow{2}{*}{0.999(4)} & \multirow{2}{*}{0.999(1)} & \multirow{2}{*}{0.999(3)} & \multirow{2}{*}{0.999(3)} & \multirow{2}{*}{0.999(1)}  \\
     fraction &&&&&&&&&  \\
    \hline
     \multirow{2}{*}{1 line} & \multirow{2}{*}{0.955(4)} & \multirow{2}{*}{0.96(4)} & \multirow{2}{*}{0.97(4)} & \multirow{2}{*}{0.81(4)} & \multirow{2}{*}{0.79(6)} & \multirow{2}{*}{0.79(6)} & \multirow{2}{*}{0.57(5)} & \multirow{2}{*}{0.50(5)} & \multirow{2}{*}{0.44(5)}  \\
      &&&&&&&&&  \\
    \hline
     \multirow{2}{*}{2 lines} & \multirow{2}{*}{0.016(8)} & \multirow{2}{*}{0.022(3)} & \multirow{2}{*}{0.02(4)} & \multirow{2}{*}{0.05(2)} & \multirow{2}{*}{0.09(4)} & \multirow{2}{*}{0.12(5)} & \multirow{2}{*}{0.10(3)} & \multirow{2}{*}{0.15(4)} & \multirow{2}{*}{0.21(5)}  \\
      &&&&&&&&&  \\
    \hline
     \multirow{2}{*}{$3+$ lines} & \multirow{2}{*}{0.027(3)} & \multirow{2}{*}{0.021(3)} & \multirow{2}{*}{0.01(3)} & \multirow{2}{*}{0.13(4)} & \multirow{2}{*}{0.12(5)} & \multirow{2}{*}{0.09(5)} & \multirow{2}{*}{0.32(5)} & \multirow{2}{*}{0.34(5)} & \multirow{2}{*}{0.35(5)}  \\
 &&&&&&&&&  \\
    \hline
    \end{tabular}%
    }
    \caption{
    For a range of error rates and HGP code instances, we show the fraction of logical failures which result from $\ssf$ halts.
    After each halt failure, we find the number of lines which support the stopping syndrome, and show here the fraction which fit on one line versus two or more lines.
    Numbers are reported using significant digits, where the number in parentheses indicates the leading digit which is not known with 95 \% confidence.
    }
\label{tab:line_fractions_2}
\end{table*}

The $\ssf$ decoder has proven performance guarantees for HGP codes with sufficiently large input graphs with sufficiently large degree \cite{leverrier2015quantum} \footnote{Formally, these guarantees are expressed in terms of a property called \emph{graph expansion}; $\ssf$ is guaranteed to converge for graphs with sufficient expansion.}.
However, for practical purposes we are more interested in HGP codes built from small input graphs with more modest degree.
In this regime $\ssf$ often does not find a correction and instead terminates with a stopping failure; see Table~\ref{tab:line_fractions_2}.

In the standard version of $\ssf$ presented in \sec{background}, stopping failures are treated as bona fide failures.
Before it stops, however, $\ssf$ can undergo many rounds resulting in a partial correction that significantly reduces the syndrome weight compared with that of the initial error.

In \sec{stopping-fail-structure} we show that the stopping syndromes tend to exhibit a specific structure.
In \sec{pal-decoder} we propose a new decoding algorithm that can often exploit this structure to correct the stopping set.
We call this the Projection-Along-a-Line ($\pal$) decoder.
In \sec{pal-numerics} we analyze the performance of the two-step decoder formed by applying $\ssf$ followed by $\pal$, finding order-of-magnitude improvements over $\ssf$ alone in regimes of interest.

\subsection{Understanding stopping failures}
\label{sec:stopping-fail-structure}

Here we seek to understand syndrome patterns that cause the $\ssf$ decoder to halt for HGP codes.

As explained in \sec{ssf}, $\ssf$ is an iterative algorithm which monotonically reduces the syndrome weight over a sequence of rounds by applying a partial correction in each round until either the syndrome is completely removed or a stopping syndrome is reached.
Stopping syndromes have the property that none of the potential corrections offered by $\ssf$ can reduce their syndrome weight.
As fixed points of the $\ssf$ algorithm, stopping syndromes are much more structured than syndromes corresponding to random errors applied by the noise model.
Furthermore, many different initial syndromes generated by random errors can flow to the same stopping syndrome under $\ssf$.

To understand the structure of stopping syndromes it is informative to consider the structure of the HGP codes that we reviewed in \sec{hgp}.
For concreteness, consider an HGP code associated with a graph $\cG = G_1 \times G_2$, formed from two copies ($G_1$ and $G_2$) of the bipartite graph $G = (V \union C,E)$, which we assume is $(\Delta_V,\Delta_C)$ biregular.
Let $\cC$ ($\tC$) be the code obtained by associating $V$ ($C$) with the bits and $C$ ($V$) with the parity checks.

Since we consider $Z$-type errors (which can occur on either CC or VV qubits), these are only detected by VC checks.
The product structure of $\cG$ implies that for any $v_1 \in V_1$ or $c_2 \in C_2$, each \emph{horizontal subgraph} $\{v_1\} \times G_2$ and each \emph{vertical subgraph} $G_1 \times \{c_2\}$ is isomorphic to $G$.
We say that any syndrome entirely supported on $\{v_1\} \times C_2$ vertices of a horizontal subgraph is a \emph{horizontal line}.
Similarly, we say that a syndrome supported on $V_1 \times \{c_2\}$ vertices of a vertical subgraph form a \emph{vertical line}.
Together, vertical and horizontal lines form \emph{line-like} syndromes which are very different from syndromes created by random errors.

\begin{figure}[h]
    \centering
    \begin{tikzpicture}
    \begin{scope}[scale=0.75]
        \draw (0,0) rectangle (3,3);
        \node at (-1.5,1.5) {$C_1$};
        \draw (0,3.2) rectangle (3,6.2);
        \node at (-1.5,4.7) {$V_1$};
        \draw (3.2,0) rectangle (6.2,3);
        \node at (1.5,7.5) {$V_2$};
        \draw (3.2,3.2) rectangle (6.2,6.2);
        \node at (4.7,7.5) {$C_2$};
        
        \foreach \x in {3.2,3.7,4.2,4.7,5.2,5.7} {
        \foreach \y in {3.2,3.7,4.2,4.7,5.2,5.7} {
            \draw ({\x},\y) rectangle ({\x+0.5},{\y+0.5});
         }
        }
        \foreach \x in {0,0.5,1,1.5,2,2.5} {
        \foreach \y in {3.2,3.7,4.2,4.7,5.2,5.7} {
            \draw ({\x},\y) rectangle ({\x+0.5},{\y+0.5});
         }
        }
        \foreach \x in {3.2,3.7,4.2,4.7,5.2,5.7} {
        \foreach \y in {0,0.5,1,1.5,2,2.5} {
            \draw ({\x},\y) rectangle ({\x+0.5},{\y+0.5});
         }
        }
        \foreach \x in {0,0.5,1,1.5,2,2.5} {
        \foreach \y in {0,0.5,1,1.5,2,2.5} {
            \draw ({\x},\y) rectangle ({\x+0.5},{\y+0.5});
         }
        }

        \foreach \y in {0,0.5,1,1.5,2,2.5} {
            \draw (-1,\y) rectangle (-0.5,{\y+0.5});
         }

         \foreach \y in {3.2,3.7,4.2,4.7,5.2,5.7} {
            \draw (-1,\y) rectangle (-0.5,{\y+0.5});
         }

         \foreach \x in {0,0.5,1,1.5,2,2.5} {
            \draw (\x,6.5) rectangle ({\x+0.5},7);
         }

         \foreach \x in {3.2,3.7,4.2,4.7,5.2,5.7} {
            \draw (\x,6.5) rectangle ({\x+0.5},7);
         }

         \fill[pattern=north west lines] (-1,3.7) rectangle (-0.5,4.2);
         \node at (-1.35,3.95) {$v_1$};

        \foreach \x in {0,0.5,1,1.5,2,2.5} {
            \fill[pattern=north west lines] (\x,3.7) rectangle ({\x+0.5},4.2);
         }

         \foreach \x in {3.2,3.7,4.2,4.7,5.2,5.7} {
            \fill[pattern=north west lines] (\x,3.7) rectangle ({\x+0.5},4.2);
         }
          \foreach \y in {0,0.5,1,1.5,2,2.5} {
            \fill[pattern=north west lines] (3.7,\y) rectangle (4.2,{\y+0.5});
         }

         \foreach \y in {3.2,3.7,4.2,4.7,5.2,5.7,6.5} {
            \fill[pattern=north west lines] (3.7,\y) rectangle (4.2,{\y+0.5});
         }
         \node at (3.95,7.35) {$c_2$};
         \end{scope}
    \end{tikzpicture}
    \caption{Schematic for vertical line corresponding to $c_2 \in C_2$ and horizontal line corresponding to $v_1 \in V_1$ in the HGP code.}
    \label{fig:exp-code2}
\end{figure}
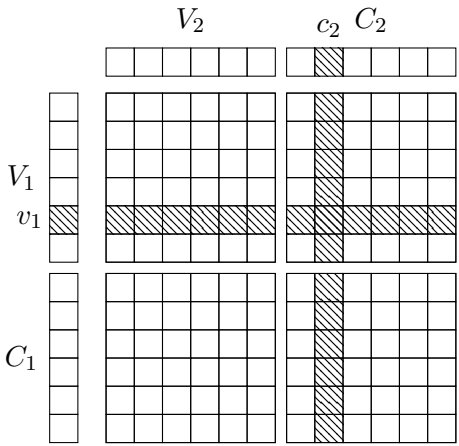

To gain some intuition for the structure of stopping syndromes, let us consider a simple scenario in which the initial syndrome which is provided to the $\ssf$ decoder is line-like.
To be precise, let $v_1 \in V_1$ and $\{v_1\} \times G_2$ identify the horizontal subgraph such that the syndrome is supported on $\{v_1\} \times C_2$.
Albeit contrived, it allows us to consider all the different cases that $\ssf$ considers in detail and illustrates how the decoder works.
In \sec{lemma-proof} we prove the following lemma.

\begin{lemma}
\label{lem:line-like}
    If the syndrome is line-like such that it is restricted to $\{v_1\} \times C_2$ for a fixed $v_1 \in V_1$, then the only valid sets $F$ that $\ssf$ can flip are supported on $\{v_1\} \times V_2$.
    Similarly, if it is line-like such that 
    it is restricted to $C_1 \times \{v_2\}$ for a fixed $v_2 \in V_2$, then the only valid sets $F$ that $\ssf$ can flip are supported on $C_1 \times \{v_2\}$.
\end{lemma}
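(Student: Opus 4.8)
The plan is to reduce the statement to a single-generator calculation followed by a short case analysis, the one non-obvious ingredient being that the classical Tanner graph has more bit-nodes than check-nodes, i.e.\ $\Delta_V < \Delta_C$. Fix the line-like syndrome $\sigma = \synd^{(i)} \subseteq \{v_1\}\times C_2$ and an arbitrary candidate $F\in\cF$, say $F\subseteq\supp(c_1,v_2)$ for a generator $(c_1,v_2)$. By the local-view analysis, the checks flipped by $F$ are the unique neighbourhood $\synd(F)=\Lambda^{(u)}(F)=F_V\times F_C^c\sqcup F_V^c\times F_C$, of total size $|\synd(F)|=a\Delta_V+b\Delta_C-2ab$ where $a=|F_V|$ and $b=|F_C|$. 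First I would split $\synd(F)$ into the checks with first coordinate $v_1$ (the \emph{on-line} part, of size $L$) and the remainder (the \emph{off-line} part, of size $H=|\synd(F)|-L$). Since $\sigma$ lives only on $\{v_1\}\times C_2$, only on-line flipped checks can cancel existing syndrome, so $|\sigma\cap\synd(F)|\le L$; as $\genscore(F)>0$ is equivalent to $2|\sigma\cap\synd(F)|>|\synd(F)|$, a necessary condition for $F$ to reduce the syndrome is $2L>|\synd(F)|$.

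Next I would compute $L$ in the three possible positions of $v_1$ and test the inequality $2L>|\synd(F)|$. If $v_1\notin\Gamma(c_1)$ then $L=0$ and it fails immediately. If $v_1\in F_V$ then $L=\Delta_V-b$, and the inequality rearranges to $(a-2)\Delta_V+b(\Delta_C+2-2a)<0$; since this is linear in $b$ for each fixed $a$, a quick check of the admissible range shows it can hold only for $a=1,b=0$, i.e.\ for $F=\{(v_1,v_2)\}$, the on-line singleton (this is exactly where $\Delta_V<\Delta_C$ is needed). If instead $v_1\in\Gamma(c_1)\setminus F_V$ then $L=b$, and the analogous computation forces $a=\Delta_C-1,\,b=\Delta_V$, i.e.\ $F=\supp(c_1,v_2)\symmdiff\{(v_1,v_2)\}$, the complement of the singleton inside the generator. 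Hence, per generator, the only candidates that can possibly reduce the syndrome are the on-line singleton and this in-generator complement.

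Finally I would discard the complement using the weight penalty in $\genscore$. Because $\supp(c_1,v_2)$ is a stabilizer generator it flips no checks, so the complement and the singleton have \emph{identical} syndrome effect, $\synd(\supp(c_1,v_2)\symmdiff\{(v_1,v_2)\})=\synd(\{(v_1,v_2)\})$. Thus whenever the complement scores positively the singleton shares its numerator but has strictly smaller weight ($1$ versus $\Delta_V+\Delta_C-1$), giving it a strictly larger score; as $\ssf$ flips a set of maximum $\genscore$, the complement is never selected, and every set $\ssf$ actually flips is an on-line singleton $\{(v_1,v_2)\}\subseteq\{v_1\}\times V_2$. I expect the complement to be the main obstacle: positivity of the score alone does \emph{not} confine improving sets to the line, so the proof cannot stop at the $2L>|\synd(F)|$ bound and must additionally invoke the maximum-score selection rule to eliminate it. The second, vertical statement then follows from the identical argument after exchanging the roles of the two graph factors and of the VV and CC qubits.
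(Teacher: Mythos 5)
Your proof is correct and follows essentially the same route as the paper's: a per-generator local-view computation based on the unique-neighbourhood decomposition $F_V\times F_C^c\sqcup F_V^c\times F_C$, a case split on whether $v_1$ lies in $F_V$, in $F_V^c$, or outside $\Gamma(c_1)$, identification of the on-line singleton and its in-generator complement as the only candidates, and a final appeal to the weight normalisation in $\genscore$ to discard the complement. The only point worth noting is that your enumeration of surviving $(a,b)$ pairs relies on $\Delta_V\le\Delta_C$ (true for all codes considered in the paper, e.g.\ the $(3,4)$- and $(5,6)$-biregular constructions), whereas the paper's parametrisation by $a=|F_V\setminus\{v_1\}|$ and $b=|F_C\setminus\synd_C|$ reaches the same conclusion without making that dependence explicit.
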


In other words, \emph{if} there is a set $F$ that can reduce the weight of the syndrome, then this set $F$ is supported on $\{v_1\} \times V_2$.
Any $Z$ stabilizer generator can overlap at most one qubit from the set $\{v_1\} \times V_2$.
Therefore, this implies that the $\ssf$ decoder proceeds by iteratively applying single-qubit corrections within the line to reduce the syndrome weight --- hence the action of $\ssf$ on a line-like syndrome is equivalent to the $\flip$ decoder for the classical code $\cC$.
Therefore all initial syndromes that are line-like will either be removed by $\ssf$ or will flow to line-like stopping syndromes, which correspond to stopping syndromes for the $\flip$ decoder of the underlying classical code. 
This makes it clear that one class of stopping syndromes for $\ssf$ are line-like (and are stopping syndromes for the $\flip$ decoder for the associated classical code).
While we arrived at this class of stopping sets from the contrived starting point of a line-like initial syndrome, our numerical studies show that in fact a very large fraction of stopping syndromes are of this type given stochastic initial errors (which rarely have line-like initial syndromes).
This can be seen in Table~\ref{tab:line_fractions_2} which shows the frequency of the minimal number of lines needed to cover the support of the stopping syndromes observed for $\ssf$ for a variety of HGP code sizes and error rates.

We end this subsection with some (non-rigorous) intuition for why one might expect stopping syndromes to be line-like for unstructured random errors when the error rates are sufficiently low.
When error rates are low, the errors are likely to be isolated and far from each other with respect to the graph $\cG$, which will produce a syndrome which forms small clusters, with clusters being far apart with respect to the graph $\cG$.
Clusters of syndromes that are far apart in the graph $\cG$ are handled independently by $\ssf$, which only applies corrections within the support of generators, and these clusters will tend to shrink as the algorithm proceeds, many of them disappearing.
At some point, the syndrome will be very sparse, and will most likely lie on a small number of lines just by virtue of having low density, most of these line-like syndromes can be expected to be uncoupled as far as $\ssf$ is concerned because they are far apart in the graph $\cG$.
If the syndrome restricted to a line is uncoupled from the rest of the system, then it will be treated by $\ssf$ as if it were being handled by $\flip$ on the classical code as discussed above.
Eventually only those clusters which are line-like and stopping syndromes for the classical codes remain. 
As we expect these classical stopping syndromes to be rare, probably none will arise and $\ssf$ will simply finish by producing a valid correction.
With some small chance, one or more classical stopping syndromes will have been produced (causing a stop failure for $\ssf$). 
If the qubit failure rate is increased beyond this low error rate regime, errors form larger clusters that are not very well separated on the graph.
The stopping sets begin to look more complex as errors on multiple lines begin to interact.
The structure of the stopping syndromes ceases to have an easily discernible form.

\subsection[PAL decoding algorithm]{$\pal$ decoding algorithm}
\label{sec:pal-decoder}

Our algorithm proceeds iteratively.
First, we express the syndrome as contributions from different lines and these are sorted into $\hor$ and $\vert$ for horizontal and vertical lines respectively.
Points are counted as both horizontal and vertical lines.
For each line $\ell \in \hor \union \vert$, we obtain a correction $F_{\ell}$.
We then apply the correction with the maximum score, i.e.\ the correction $F_{\ell}$ that maximizes the score defined in Eq.~\ref{eq:score}
\begin{align*}
    \score(F_{\ell}) = \frac{|\synd| - |\synd \oplus \synd(F_{\ell})|}{|F_{\ell}|}~.
\end{align*}

Algorithm~\ref{alg:get_lines} presented below describes how we classify syndromes into lines.
\begin{algorithm}[ht]
    \begin{algorithmic}[0]
        \State \textbf{Input:} Syndrome $\synd \in \bbF_2^{V_1 \times C_2}$, 
        \State \textbf{Output:} $\hor$, $\vert$.

    \end{algorithmic}
    \begin{algorithmic}[1]
        \State $\hor, \vert \leftarrow \emptyset$ will be the set of horizontal and vertical lines.
        \For{$v \in V$}
            \State $\hor \leftarrow \hor \union (\{v\} \times C \cap \synd)$
        \EndFor
        \For{$c \in C$}
            \State $\vert \leftarrow \vert \union (V \times \{c\} \cap \synd)$
        \EndFor
        \State\Return $\hor, \vert$
    \end{algorithmic}
    \caption{$\ttt{GetLines}(\synd)$}
    \label{alg:get_lines}
\end{algorithm}
In our implementation of Algorithm~\ref{alg:get_lines}, we use a dictionary to decrease the run time from $O(|\synd|^2)$, as specified in the pseudocode, to $O(|\synd|)$.

\ttt{GetLines} accepts as input the syndrome $\synd \in \bbF_2^{V_1 \times C_2}$.
For each $v_1 \in V_1$, it scans through $\synd$, and if it finds any unsatisfied parity checks supported on the subset $\{v_1\} \times C_2$, then it classifies this as a \emph{horizontal} line.
It repeats a similar process for every $c_2 \in C_2$ and scans for unsatisfied parity checks supported on $V_1 \times \{c_2\}$ \footnote{
We find $\hor$ and $\vert$ may contain elements in common corresponding to syndromes that are isolated points.}.

\begin{algorithm}[h]
    \begin{algorithmic}[0]
        \State \textbf{Input:} Syndrome $\synd \in \bbF_2^{V_1 \times C_2}$, 
        \State \textbf{Output:} Correction $\Ehat$
    \end{algorithmic}
    \begin{algorithmic}[1]\onehalfspacing
        \State $i \leftarrow 0$
        \State $\Ehat \leftarrow \emptyset$.
        \State $\hor, \vert \leftarrow \ttt{GetLines}(\synd)$
        \State $\lines \leftarrow |\hor| + |\vert|$.
       \While{$\lines > 0$ and $i < \scrI$}
        \State $i \leftarrow i+1$
        \State $F_i \leftarrow 0$
        \For{$\synd_j \in \hor \union \vert$}
            \State $F_{j} \leftarrow$ $\bposd(\synd_j)$ \label{line:bposd}.
            \If{$\score(F_j) > \score(F_i)$}
                \State $F_i \leftarrow F_j$
            \EndIf
        \EndFor
        \State $\Ehat \leftarrow \Ehat \symmdiff F_i$
        \State $\synd \leftarrow \synd \oplus \synd(F_i)$
        \State $\hor, \vert \leftarrow \ttt{GetLines}(\synd)$
        \State $\lines \leftarrow |\hor| + |\vert|$.
        \EndWhile
        \State \Return $\Ehat$.
    \end{algorithmic}
    \caption{$\pal(\synd)$}
    \label{alg:pal}
\end{algorithm}

The $\pal$ algorithm, described in Algorithm~\ref{alg:pal}, is an iterative decoding algorithm that takes the classification into lines as input.
It decodes each set in $\hor \union \vert$ separately using $\bposd$ as stated in Line \ref{line:bposd}.
These lines are interpreted as syndromes corresponding to errors on classical codes; an element of $\hor$ of the form $\{v_1\} \times C_2$ will only return a correction within $\{v_1\} \times V_2$.
Similarly, an element in $\vert$ of the form $V_1 \times \{c_2\}$ will only return a correction within $C_1 \times \{c_2\}$.
To be precise, we define two versions of $\bposd$ in our implementation---one used for $\hor$ and the other used for $\vert$.
Line \ref{line:bposd} does not include this complexity for conceptual clarity and ease of presentation.
We highlight that any decoding algorithm for classical codes can be used here in principle.
$\bposd$ has the best performance among the heuristics we tried.

In each iteration, we flip the line whose correction has the largest score.
This continues until the syndrome is the $0$ string or until the total number of iterations is $\scrI$.
As the physical error rate increases, the number of required iterations grows.
In our simulations, we set $\scrI = 20$ as an upper bound.

\noindent The complete decoding algorithm is presented below in Algorithm~\ref{alg:decode-update}.
\begin{algorithm}[h]
    \begin{algorithmic}[0]
        \State \textbf{Input:} Syndrome $\synd \in \bbF_2^{V_1 \times C_2}$
        \State \textbf{Output:} Correction $\Ehat$ if algorithm converges and FAIL otherwise.
    \end{algorithmic}
    \begin{algorithmic}[1]\onehalfspacing
        \State $\Ehat_{\ssf} \leftarrow \ttt{SSF}(\synd)$
        \State $\Ehat_{\pal} \leftarrow \pal(\synd(\Ehat_{\ssf}))$
        \If{$\synd(\Ehat_{\ssf} \symmdiff \Ehat_{\pal}) \oplus \synd$ is zero}
            \State \Return $\Ehat \leftarrow \Ehat_{\ssf} \symmdiff \Ehat_{\pal}$
        \Else
            \State \Return FAIL (halt).
        \EndIf
    \end{algorithmic}
    \caption{\ttt{Decode}($\synd$)}
    \label{alg:decode-update}
\end{algorithm}

To run $\bposd$ on each classical code, we need to know how to initialize log-likelihood ratios (LLRs) for each line.
Setting the initial LLRs for $\bposd$ is not trivial.
Indeed, for a fixed qubit failure rate $\pphys$, $\ssf$ may manage to correct at least some of the initial errors before it terminates.
Therefore, the distribution of errors along a line can be very different from the physical failure rate $\pphys$.
For example, errors may be ``concentrated'' along a specific line, i.e.\ while the total number of errors in the HGP code decreases, some lines may end up with more errors than they began with.
In \app{details}, we include the details required to initialize and run $\bposd$ along each line.

One could consider a further optimized version where we could infer the physical bit error rates $q_V$ and $q_C$ as a function of both $\pphys$ but also as a function of the number of syndromes along that line.
Instead, we take an average over all possible lines.

\paragraph{Time complexity of $\ssf+\pal$:}
The time complexity of the $\ssf$ decoder is $O(N)$.
The complexity of the $\pal$ decoder depends on the number of lines and on the complexity of $\bposd$.
The latter is straightforward -- it is $O(N^{3/2})$ per line as $\bposd$ runs in $O(n^3)$ time for codes of block length $n$ and each line is a code of length $O(\sqrt{N})$.
However, as the number of lines is itself a random variable, we can only heuristically estimate the complexity of $\ssf+\pal$.
Algorithm~\ref{alg:get_lines} takes time $O(N)$.
The number of lines after each round of $\pal$ does not increase. 
To see why, note that applying any correction along a line does not increase the number of lines.

With this observation, suppose there are $L$ lines.
Recall that $\pal$ is permitted to run for a maximum number of iterations $\scrI$.
Therefore, the total complexity of $\bposd$ is at most $O(\scrI \cdot L \cdot N^{3/2})$.

\subsection{Results of numerical simulations}
\label{sec:pal-numerics}

The results of our simulations are presented in \fig{ssf-pal-34}.
For low qubit failure rates, $\ttt{PAL}$ improves the decoding performance by at least an order-of-magnitude.
For instance, consider the HGP code obtained as a product of $(3,4)$ biregular graphs.
At $1$\% physical error rates, the logical error probability for the code with $3600$ physical qubits falls from roughly $5 \times 10^{-2}$ to $4 \times 10^{-4}$.
The multiple by which the logical error rate decreases improves as we decrease the qubit failure rate.

\begin{figure*}
    \centering
    \begin{tikzpicture}
        \node at (0,0) {\includegraphics[scale=0.55]{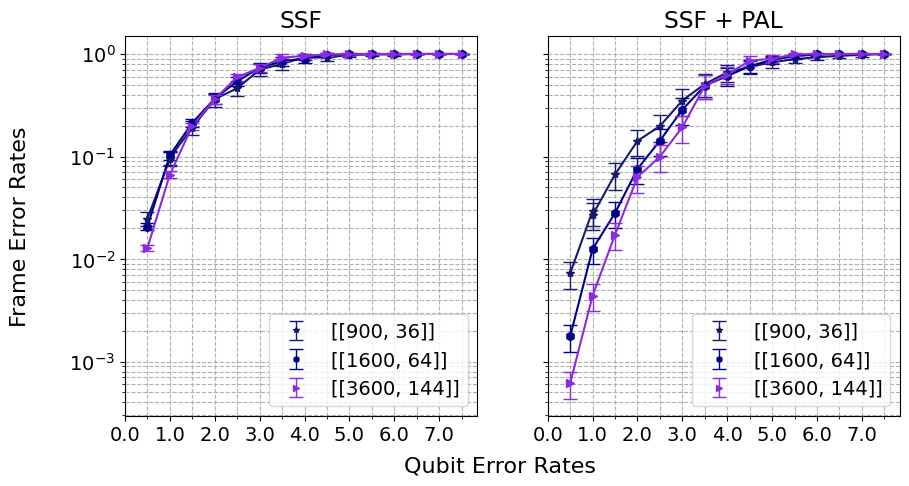}};
        \fill[white] (-2.65,3.5) rectangle (-1.5,3);
        \node at (-2,3.2) {$\mathsf{SSF}$};

        \fill[white] (2.8,3.6) rectangle (4.5,3);
        \node at (3.8,3.1) {$\mathsf{SSF + PAL}$};
        
        \fill[white] (-6.5,-1.5) rectangle (-5.75,2);
        \node[rotate=90] at (-6.25,0) {$\mathsf{Logical}$ $\mathsf{error}$ $\mathsf{rate}$};
    \end{tikzpicture}
    \begin{tikzpicture}
        \node at (0,0) {\includegraphics[scale=0.55]{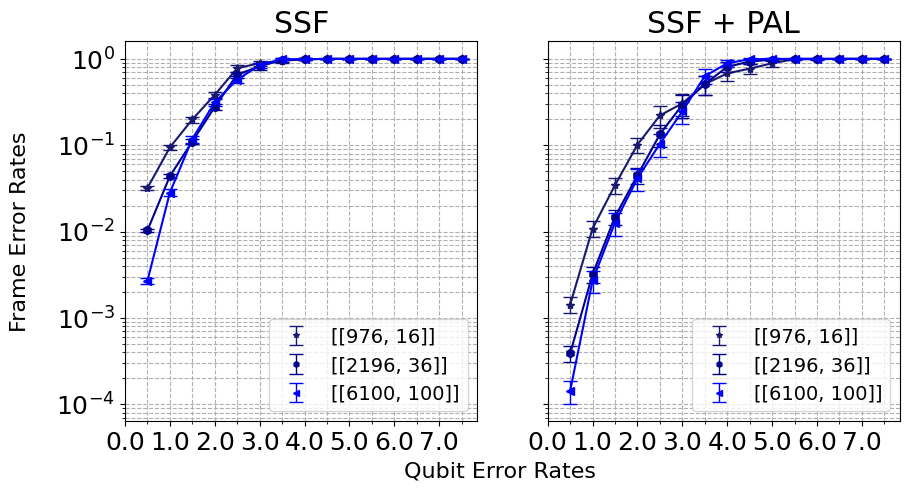}};
        \fill[white] (-6.5,-1.5) rectangle (-5.75,2);

        \fill[white] (0,4.5) rectangle (5.5,2.95);
        \node at (3.8,3.1) {$\mathsf{SSF + PAL}$};
        
        \fill[white] (-3,4.9) rectangle (-0.5,2.95);
        \node at (-2.2,3.1) {$\mathsf{SSF}$};

        \node[rotate=90] at (-6.25,0) {$\mathsf{Logical}$ $\mathsf{error}$ $\mathsf{rate}$};
    \end{tikzpicture}
    \caption{
    Comparing the change in logical error probability before and after the post-processing algorithm $\ttt{PAL}$ for the hypergraph product code constructed from $(3, 4)$-biregular bipartite graphs (upper panels) and $(5,6)$-biregular bipartite graphs (lower panels).
    This results in a family of quantum codes with $\Delta_q = 8,\Delta_g = 7$ (upper panels) and $\Delta_q = 12,\Delta_g=11$ (lower panels).
    The error bars represent the 95\% confidence intervals, i.e.\ $\approx 1.96$ standard deviations.
    We note that there is not a very smooth improvement of the curves for successive codes for the $(5,6)$ case.
    We believe this is due to the random procedure used to generate the classical codes.
    } 
    \label{fig:ssf-pal-34}
\end{figure*}

\begin{figure*}[h]
    \centering

    \label{fig:ssf-pal-56}
\end{figure*}

\section{Conclusions and future work}

We have studied failures for Small-Set-Flip ($\ssf$) on hypergraph product codes.
We focused on stopping failures, instances where $\ssf$ terminates without entirely correcting the error.
The error when $\ssf$ terminates was called the stopping set and the corresponding syndrome the stopping syndrome.
We found that the stopping syndromes supported on lines in the hypergraph product code.
A line corresponds to a subgraph of the hypergraph product code of the form $\{v_1\} \times C_2$ for $v_1 \in V_1$ or $V_1 \times \{c_2\}$ for $c_2 \in C_2$.
As these lines correspond to classical codes $\cC$ and $\tC$, this allowed us to reduce the problem of decoding a quantum code to decoding a classical code.
To this end, we propose using the Projection-Along-a-Line decoder ($\pal$) after $\ssf$.

The algorithm $\pal$ is itself an iterative algorithm; in each iteration, it identifies corrections along a line.
As the stopping syndrome can be supported on lines that share parity checks, the corrections along different lines are not independent.
We use an appropriately defined score function to pick one line and the corresponding correction.
Intuitively, this corresponded to the correction that reduces the syndrome the most with least number of flips.

While $\pal$ allows for any classical decoding algorithm to be used as a subroutine, we have used $\bposd$.
We found that for error rates of $1\%$ or lower, the logical failure rate can improve by an order-of-magnitude.

One can consider the use of $\pal$ as a post-decoder for other decoders.
In \app{bp-predec} we compare the performance of the $\bp+\ssf$ decoder with $\bp+\ssf + \pal$ with HGP codes built from (3,4)-biregular graphs. 
In this setting, we find that the improvements due to $\pal$ are negligible, since the ratio of failures which are stop type for $\bp+\ssf$ is small in this case.
It would be interesting to explore larger degree HGP codes and other decoders to find more suitable applications of $\pal$.

In \app{space-time-cost}, we explain our implementation of $\ssf$ in detail, and include some techniques to modestly reduce the time complexity for quantum codes constructed via the product of $(3,4)$ biregular graphs.

\section{Acknowledgements}

\noindent 
We thank  Anthony Leverrier \& Nicolas Delfosse for comments.
LS would like to thank Patrick Hayden and the Stanford Institute for Theoretical Physics for hosting his visit while this research was undertaken.

\noindent AK acknowledges support from the Bloch Postdoctoral Fellowship at Stanford University and NSF grant CCF-1844628.

\bibliographystyle{unsrtabbrev}
\bibliography{references}

\appendix

\section{Proof of Lemma}[Proof of \lemm{line-like}]
\label{sec:lemma-proof}

In this section we prove \lemm{line-like}, which concerns the support of a correction applied by $\ssf$ when the syndrome is line-like.

First, we rule out trivial cases---the vast majority of $Z$ stabilizer generators will not participate.
Flipping qubits that are not themselves supported on the line can only \emph{increase} the syndrome weight.
Even though it is a trivial observation, we formally state this in the following claim.

\begin{claim}
\label{claim:trivial-case}
If $F$ is in the support of a $Z$ stabilizer generator whose local view does not contain $\mathsf{UNSAT}$, then the weight of the syndrome can only increase by flipping $F$.
\end{claim}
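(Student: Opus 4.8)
The plan is to reduce the claim to the structural fact that a partial correction $F$ drawn from the support of a single $Z$ generator can only toggle parity checks inside that generator's local view. Fix the generator as $(c_1,v_2)$ with $c_1 \in C_1$ and $v_2 \in V_2$, so that $F \subseteq \supp(c_1,v_2) = \Gamma(c_1)\times\{v_2\} \,\sqcup\, \{c_1\}\times\Gamma(v_2)$, and recall that $F_V \subseteq \Gamma(c_1)$ and $F_C \subseteq \Gamma(v_2)$ are the restrictions of $F$ to the VV and CC qubits. The local view of this generator is $\Gamma(c_1)\times\Gamma(v_2)$, and the hypothesis is that none of its checks are unsatisfied, i.e.\ $\supp(\synd)\cap\bigl(\Gamma(c_1)\times\Gamma(v_2)\bigr)=\emptyset$.

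First I would pin down the support of $\synd(F)$, the set of checks whose outcome is flipped by applying $F$. By definition these are exactly the checks adjacent to an odd number of qubits of $F$, so $\supp(\synd(F)) \subseteq \Lambda(F) = F_V\times\Gamma(v_2) \,\cup\, \Gamma(c_1)\times F_C$. Since $F_V\subseteq\Gamma(c_1)$ and $F_C\subseteq\Gamma(v_2)$, both terms lie in $\Gamma(c_1)\times\Gamma(v_2)$, so $\supp(\synd(F))$ is contained in the local view; in fact it equals the unique neighborhood $\Lambda^{(u)}(F)$ because each local-view check meets the generator's support in exactly two qubits, but only the containment is needed.

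Combining this with the hypothesis gives $\supp(\synd)\cap\supp(\synd(F))=\emptyset$. For two $\bbF_2$ vectors with disjoint supports the weight of the sum is the sum of the weights, so
\[
  |\synd \oplus \synd(F)| = |\synd| + |\synd(F)| \;\geq\; |\synd|,
\]
which is precisely the statement that flipping $F$ cannot reduce the syndrome weight (equivalently $\genscore(F)\le 0$, so $\ssf$ would never select such an $F$).

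The claim is deliberately elementary and I expect no real obstacle; it is a pruning step that lets us ignore the vast majority of generators in the proof of \lemm{line-like}. The only point deserving care is the containment $\supp(\synd(F))\subseteq\Gamma(c_1)\times\Gamma(v_2)$, which I would derive directly from $F_V\subseteq\Gamma(c_1)$ and $F_C\subseteq\Gamma(v_2)$ rather than re-examining each check individually. The disjoint-support observation then yields the weight comparison with no case analysis.
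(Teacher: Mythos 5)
Your proof is correct and is precisely the formalization of what the paper treats as a trivial observation (the paper states Claim~\ref{claim:trivial-case} without proof): every check toggled by $F$ lies in $\Lambda(F)\subseteq\Gamma(c_1)\times\Gamma(v_2)$, which by hypothesis is disjoint from $\supp(\synd)$, so the syndrome weight cannot decrease. The only nitpick is the edge case $F=\supp(c_1,v_2)$, where $\Lambda^{(u)}(F)=\emptyset$ and the weight is unchanged rather than strictly increased, so the claim should be read as ``cannot decrease''; this does not affect its use in pruning generators for \lemm{line-like}.
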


Consequently, $\ssf$ will only apply corrections from within stabilizer generators that are adjacent to the line, i.e.\ using stabilizer generators that contain within their support qubits in the line.
This corresponds to $Z$ stabilizer generators of the form $\Gamma(v_1) \times V_2$.
With this observation, we proceed to study only stabilizer generators that contain $\mathsf{UNSAT}$ in the local view.

Next, we ask how the weight of the syndrome changes when a set of qubits $F$ is flipped.
Consider sets $F$ within the support of $(c_1,v_2)$ for $c_1 \in \Gamma(v_1)$.
Let $F \subseteq \supp(c_1,v_2)$ and let $F_V = F \cap V_1 \times V_2$ and $F_C = F \cap C_1 \times C_2$ be the support of the flip on VV and CC qubits respectively.
If we flip the set $F$, parity checks that are adjacent to a single qubit in $F$ will be affected whereas parity checks that are adjacent to two qubits in $F$ will remain unaffected.
Recall that the set of parity checks adjacent to a single qubit of $F$ is called the unique neighborhood of $F$ and is denoted $\Lambda^{(u)}(F)$.

\noindent \textbf{Observation:} If any unsatisfied parity checks overlap the unique neighborhood, they will be flipped; these are supported on $\Lambda^{(u)}(F) \cap \synd$.
On the other hand, after flipping $F$, the following satisfied parity checks will become unsatisfied $\Lambda^{(u)}(F) \setminus \synd$.

\noindent The change in the syndrome is thus
\begin{align}
    |\Lambda^{(u)}(F) \cap \synd| - |\Lambda^{(u)}(F) \setminus \synd|  ~.
\end{align}
The set $F$ can be used to decrease the syndrome if and only if the above quantity is positive.

To illustrate, we use the figure below.
The parity checks lie along the line $\{v_1\} \times C_2$ corresponding to syndromes are cross hatched.
The set $F$ is a candidate set within the support of $(c_1,v_2)$ and the corresponding qubits are colored.
The green and blue squares in the support of $(c_1,v_2)$ are colored green and blue to depict $F_V$ and $F_C$ respectively.
Parity checks in the multi-neighborhood (in red) are flipped twice.
Parity checks in the unique neighborhood (in green/blue) are flipped once.
\begin{center}
    \begin{tikzpicture}
    \fill[fill=green!20] (0,1.5) rectangle (0.5,2);
    \fill[fill=green!20] (0,2) rectangle (0.5,2.5);
    \fill[fill=blue!20] (1,0) rectangle (1.5,0.5);
    \fill[red!20] (1.5,1.5) rectangle (2,2);
    \fill[red!20] (1.5,2) rectangle (2,2.5);
    \fill[red!20] (1,1.5) rectangle (1.5,2);
    \fill[red!20] (1,2) rectangle (1.5,2.5);
    \fill[blue!20] (1,1) rectangle (1.5,1.5);
    \fill[blue!20] (1,2.5) rectangle (1.5,3);
    \fill[green!20] (2,1.5) rectangle (2.5,2);
    \fill[green!20] (2,2) rectangle (2.5,2.5);    
    \node at (0.25,1.75) {$v_1$};
    \fill[pattern=north west lines] (1,1.5) rectangle (1.5,2);
    \fill[pattern=north west lines] (1.5,1.5) rectangle (2,2);
    \foreach \x in {1,1.5,2} {
    \foreach \y in {1,1.5,2,2.5} {
        \draw (\x,\y) rectangle ({\x+0.5},{\y+0.5});
    }
    }
    \foreach \x in {1,1.5,2} {
        \draw (\x,0) rectangle ({\x+0.5},0.5);
    }
    \foreach \y in {1,1.5,2,2.5} {
        \draw (0,\y) rectangle (0.5,{\y+0.5});
    }
    \draw (0.5,0.5) rectangle (0,0);
    
    \node[rotate=-45] at (-0.2,-0.2) {$(c_1,v_2)$};
    \node at (1.75,-0.5) {$\{c_1\} \times \Gamma(v_2)$};
    \node[rotate=90] at (-0.5,2) {$\Gamma(c_1) \times \{v_2\}$};
\end{tikzpicture}
\end{center}

Suppose we restrict our attention to $Z$ stabilizers whose local view contains $\mathsf{UNSAT}$.
If the syndrome $\synd$ is supported on the line $\{v_1\} \times C_2$, qubits that are themselves not on the line will never be flipped by $\ssf$.
\begin{claim}
\label{claim:only-on-line}
    Let $(c_1,v_2) \in C_1 \times V_2$ be a generator whose local view overlaps $\mathsf{UNSAT}$ non-trivially.
    Suppose $F \subseteq \supp(c_1,v_2)$ but $F \cap \{v_1\} \times V_2 = \emptyset$.
    $F$ will not be flipped by $\ssf$.
\end{claim}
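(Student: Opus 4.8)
The plan is to show that such an off-line $F$ is never the set selected by $\ssf$: either flipping it cannot lower the syndrome weight, or else there is a strictly cheaper set with an identical effect, so $F$ fails to be the maximizer.

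First I would pin down the regime. Since the local view $\Gamma(c_1)\times\Gamma(v_2)$ of $(c_1,v_2)$ meets an unsatisfied check and every unsatisfied check lies on the line $\{v_1\}\times C_2$, we must have $v_1\in\Gamma(c_1)$; combined with $F\cap(\{v_1\}\times V_2)=\emptyset$ this gives $v_1\notin F_V$, i.e.\ $v_1\in F_V^c$. Writing $a:=|\Lambda^{(u)}(F)\cap\synd|$, the Observation says the syndrome weight changes by $\delta(F)=a-|\Lambda^{(u)}(F)\setminus\synd|=2a-|\Lambda^{(u)}(F)|$, which is exactly the numerator of $\score(F)$. Because $\synd\subseteq\{v_1\}\times C_2$ and $v_1\notin F_V$, intersecting $\Lambda^{(u)}(F)=F_V\times F_C^c\sqcup F_V^c\times F_C$ with $\synd$ annihilates the first block and leaves only $\{v_1\}\times F_C$, so
\[
\delta(F)=2a-|F_V|\,|F_C^c|-|F_V^c|\,|F_C|,\qquad a\le|F_C|,\quad |F_V^c|\ge 1 .
\]

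I would then split on the sign of $\delta(F)$. If $\delta(F)\le 0$ then $\score(F)\le 0$ and $\ssf$ never flips $F$. If $\delta(F)>0$, a short estimate forces $F$ to fill almost the entire generator: discarding the nonnegative term $|F_V|\,|F_C^c|$ and using $a\le|F_C|$ gives $2a>|F_V^c|\,|F_C|\ge|F_V^c|\,a$, hence $|F_V^c|<2$, so $|F_V^c|=1$ (as $v_1\in F_V^c$); substituting back yields $|F_C|>(\Delta_C-1)\,|F_C^c|$, so $F_C$ is nearly all of $\Gamma(v_2)$.

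The key step, which I expect to be the main obstacle, is to eliminate this surviving positive-score case. Here I would exploit the CSS structure: $\supp(c_1,v_2)$ is the support of a $Z$ stabilizer generator, so $\synd(\supp(c_1,v_2))=0$, and by $\bbF_2$-linearity the complement $G:=\supp(c_1,v_2)\symmdiff F$ obeys $\synd(G)=\synd(F)$. Flipping $G$ therefore realizes the same syndrome change $\delta(G)=\delta(F)>0$, while $|G|=|F_V^c|+|F_C^c|=1+|F_C^c|$ is strictly below $|F|=(\Delta_C-1)+|F_C|$, the gap being $|F|-|G|=(\Delta_C-2)+(|F_C|-|F_C^c|)\ge 1$ by the two bounds just derived. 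Hence $\score(G)=\delta(F)/|G|>\delta(F)/|F|=\score(F)$, so $F$ is not the maximizer and is never flipped. The only delicate bookkeeping is the final integer inequality $|G|<|F|$; the conceptual content is entirely the complement trick, which rewrites any off-line correction as its strictly cheaper, line-anchored partner $G$ (note $v_1\in G_V$). This argument uses no special feature of the code beyond biregularity, so it applies to general HGP inputs.
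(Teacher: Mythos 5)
Your proof is correct, and while it starts from the same place as the paper's --- the decomposition $\Lambda^{(u)}(F)=F_V\times F_C^c\sqcup F_V^c\times F_C$ and the syndrome-change count $2|\Lambda^{(u)}(F)\cap\synd|-|\Lambda^{(u)}(F)|$ --- it finishes by a genuinely different route. The paper writes the change in terms of the parameters $a=|F_V\setminus\{v_1\}|$ and $b=|F_C\setminus\synd_C|$, maximizes it case-by-case over $v_1\in F_V$ versus $v_1\in F_V^c$, and then appeals somewhat informally to the weight normalization of the score to discard the large maximizing configuration (``Equivalently, we can let $|F_V|=\Gamma(c_1)\setminus\{v_1\}$\dots''). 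You instead extract structural consequences of $\delta(F)>0$ (namely $|F_V^c|=1$ and $|F_C|>(\Delta_C-1)|F_C^c|$) and then kill the survivors with the complement trick: since $\synd(\supp(c_1,v_2))=0$, the set $G=\supp(c_1,v_2)\symmdiff F$ has the same syndrome effect, lies in $\cF$, contains the line qubit, and is strictly smaller, so $\score(G)>\score(F)$ and $F$ is never the argmax. This makes rigorous exactly the step the paper leaves as a remark, and it cleanly separates the two ways $F$ can fail (non-positive score versus being dominated). Two small points of hygiene: the surviving intersection is $\{v_1\}\times(\synd_C\cap F_C)$ rather than $\{v_1\}\times F_C$ (harmless, since you only use $a\le|F_C|$), and the final inequality $|F|-|G|\ge 1$ uses $\Delta_C\ge 2$, which holds for any non-degenerate biregular input graph but is worth stating.
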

\begin{proof}
To evaluate the change in the syndrome, we are only concerned with the restriction of the syndrome to the local view of $(c_1,v_2)$.
This is expressible as $\synd \cap \Gamma(c_1) \times \Gamma(v_2) = \{v_1\} \times \synd_C$ where $\synd_C$ is supported on $\Gamma(v_2)$.

\noindent Recall that we can write $\Lambda^{(u)}(F)$ as a disjoint union
\begin{align}
     F_V \times F_C^c \sqcup F_V^c \times F_C~.
\end{align}
Therefore, we can write $\Lambda^{(u)}(F) \setminus \synd$ in terms of these two portions separately.
Let $|F_V \setminus \{v_1\}| = a$ and $|F_C \setminus \synd_C| = b$ for the sake of readability.
The quantity $a$ expresses the number of VV qubits in the support of $(c_1,v_2)$ that are flipped that are \emph{not} along the line $\{v_1\} \times V_2$.
The quantity $b$ represents CC qubits \emph{not} adjacent to the unsatisfied parity checks.

\noindent We first consider the negative contribution to the change in the syndrome, $|\Lambda^{(u)}(F) \setminus \synd|$.
As $\synd$ is only supported along $\{v_1\} \times \Gamma(v_2)$, the first portion $|F_V \times F_C^c \setminus \synd|$ can be expressed as
\begin{align}
\label{eq:part1}
    &|(F_V \setminus \{v_1\}) \times  F_C^c \setminus \synd_C| \nonumber\\
    =&|F_V \setminus \{v_1\}| \cdot ((\Delta_V - |\synd_C|) - |F_C \setminus \synd_C|)\nonumber\\
    =&a [(\Delta_V - |\synd_C|) - b]~.
\end{align}
The second portion is $|F_V^c \times F_C \setminus \synd|$, which is
\begin{align}
\label{eq:part2}
    &|F_V^c \setminus \{v_1\} \times (F_C \setminus \synd_C)| \nonumber \\
    = &\left( (\Delta_C-1) - |F_V \setminus \{v_1\}| \right)\cdot |F_C \setminus \synd_C| \nonumber \\
    = &[\Delta_C-1 - a] \cdot b~.
\end{align}

\noindent Together, $|\Lambda^{(u)}(F) \setminus \synd|$ is the sum of the contributions from Eq.\ \eqref{eq:part1} and Eq.\ \eqref{eq:part2}:
\begin{align}
    & (\Delta_V - |\synd_C|)a + (\Delta_C - 1)b -2ab\\
    =& (\Delta_V - |\synd_C \union F_C|)a + (\Delta_C - |F_V \union \{v_1\}|)b~.
\end{align}

\noindent Next, we consider the positive contribution to the change in the syndrome, $|\synd \setminus \Lambda^{(u)}(F)|$
\begin{align}
     (F_V \cap \{v_1\}) \times (\synd_C \setminus F_C^c) \sqcup (F_V^c \cap \{v_1\}) \times (\synd_C \setminus F_C).
\end{align}
Observe that $v_1$ can either belong to either $F_V$ or $F_V^c$; it cannot belong to both.
As a result, we compute the change in the syndrome case-by-case.

\noindent \textbf{Case 1:} $F_V \cap \{v_1\} = \{v_1\}$.
Noting that $|\synd_C \setminus F_C^c| = |\synd_C \cap F_C|$, the change in the syndrome is
\begin{align}
\label{eq:case1}
    =&|F_C| - \left(\Delta_V - |\synd_C \union F_C|\right)a - \left(\Delta_C-|F_V|\right)b
\end{align}
Observe that $|\synd_C \union F_C| \leq \Delta_V$ and $|F_V| \leq \Delta_C$.
Therefore, Eq.\ ~\eqref{eq:case1} is clearly maximum when $a = b = 0$.

\noindent \textbf{Case 2:} $F_V^c \cap \{v_1\} = \{v_1\}$.
The change in the syndrome is
\begin{align}
    |\synd_C \setminus F_C| - (\Delta_V - |\synd_C \union F_C|)a  - (\Delta_C-1-|F_V|)b~.
\end{align}
Again, $|\synd_C \union F_C| \leq \Delta_V$ and $|F_V|+1\leq \Delta_C$.
This is maximum when $a = 0$, $b=0$ and $F_C \cap \synd_C = \emptyset$.
However, this is simply the trivial set $F = \emptyset$.
Equivalently, we can let $|F_V| = \Gamma(c_1) \setminus \{v_1\}$ and $F_C \sqcup \synd_C = \Gamma(v_2)$.

The score is normalized by the weight of $F$, and therefore, we pick the smallest set $F$ that results in the largest change in the syndrome.
The only valid corrections, therefore, correspond to flipping the qubit $(v_1,v_2)$ in the support of $(c_1,v_2)$.
\end{proof}

\noindent Claim~\ref{claim:trivial-case} and Claim~\ref{claim:only-on-line} together imply Lemma~\ref{lem:line-like}.

\section[PAL implementation details]{$\pal$ implementation details}
\label{app:details}
We use the implementation of $\osd$ from Roffe \emph{et al}.\ \cite{roffe2020decoding} and we use the variant $\osd$-$\ttt{CS}$.
For $\bp$, we use the min-sum variant of $\bp$ and fix the maximum number of iterations to $30$.
This is in contrast to \cite{roffe2020decoding} where $\bp$ runs for a number of iterations equal to the block length and is instead along the lines of the decoder proposed by Panteleev and Kalachev \cite{panteleev2021degenerate}.

To initialize $\bp$, we estimate the LLRs as a function of the inital error rate $\pphys$.
To this end, we use the following heuristic.
When $\ssf$ results in a stopping failure and the stopping set is line like, we record the average number of bit errors in that line.
Of course, the final decoder does not have access to the actual distribution of errors; rather, we use this heuristic to collect data and inform our choice of LLRs.
We refer to the average number of errors as $q_{V}$ and $q_C$ for line-like errors in $\cC$ and $\tC$ respectively.
These are both functions of $\pphys$ and need to be measured accordingly.
Inferring $q_V$ and $q_C$ is only done once and therefore does not affect the time complexity.

\section[Using PAL after BP + SSF]{Using $\pal$ after $\bp + \ssf$}
\label{app:bp-predec}
Using a combination $\ssf$ along with other decoding algorithms has been considered in the literature.
A popular approach has been to use belief propagation ($\bp$) followed by an appropriate post-decoding algorithm such as $\osd$.
Similarly, $\ssf$ can also be used as a low-complexity decoder to complement $\bp$ \cite{Grospellier2021}.
In this section, we consider applying $\pal$ as a post decoder for $\bp + \ssf$, forming a decoder we will call $\bp + \ssf + \pal$.

Before we specify details of $\bp + \ssf + \pal$, we first note that a number of different options have been considered for $\bp + \ssf$.
The implementations in \cite{Grospellier2021} used a sum-product version of $\bp$ prior to running $\ssf$.
However, the implementations in \cite{panteleev2021degenerate,roffe2020decoding} used a min-sum version of $\bp$ prior to running $\ssf$, which can lead to orders-of-magnitude improvement in the logical error probability.

The implementation in in \cite{Grospellier2021}, which uses a sum-product version of $\bp$, stops at regular intervals to see whether $\ssf$ can correct the error.
If it cannot, it continues running $\bp$ before trying $\ssf$ again.
This poses some challenges since this sum-product version of $\bp$ often does not terminate.
For instance, the total syndrome weight associated with the deduced flip oscillates with the number of iterations of $\bp$.
To handle these challenges, the user specifies the maximum number of iterations of $\bp$ to try before declaring failure.

\begin{figure*}
    \centering
    \begin{tikzpicture}
        \node at (0,0) {\includegraphics[scale=0.6]{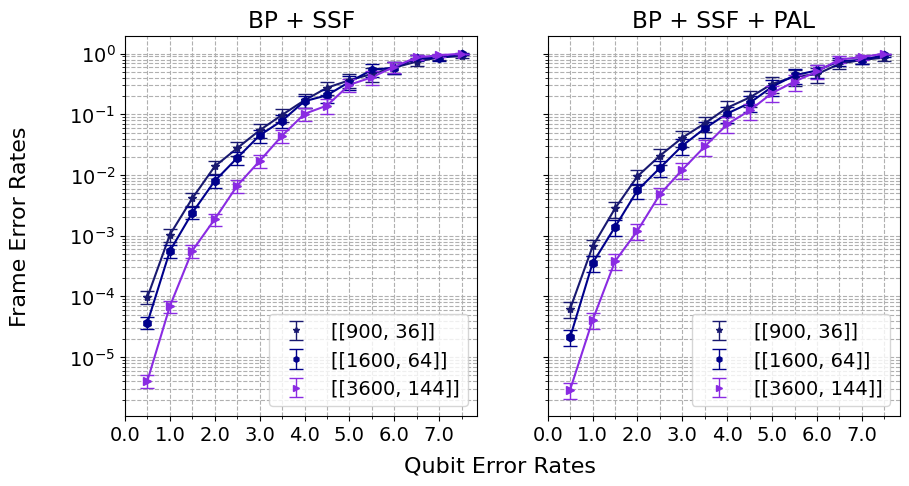}};
        \fill[white] (-6.8,-1.5) rectangle (-6.25,2);
        \node[rotate=90] at (-6.5,0) {$\mathsf{Logical}$ $\mathsf{error}$ $\mathsf{rate}$};
    \end{tikzpicture}
    \caption{Comparing the change in logical error probability before and after the post-processing algorithm $\ttt{PAL}$ while using $\bp$ as a pre-processing algorithm.
    Hypergraph product of $(3, 4)$-biregular bipartite graphs ($\Delta_q = 8$ and $\Delta_g = 7$) are used.
     The error bars represent the 95\% confidence intervals, i.e $\approx$ 1.96 standard deviation.
    }
    \label{fig:all34}
\end{figure*}

On the other hand, we observed that the min-sum version $\bp$ (similar to that in \cite{panteleev2021degenerate,roffe2020decoding}) did not exhibit the oscillatory behavior which was observed in the sum-product case.
We therefore implement the min-sum version of $\bp$ with a fixed number of rounds prior to $\ssf$ as specified in Algorithm \ref{alg:BPSSFNonIt}).
Moreover, the logical error rates achieved when the min-sum version of $\bp$ is used are considerably lower than those in \cite{Grospellier2021} with the sum-product version.
Furthermore, for the range of values that we simulate for, we did not observe an error floor.
We refer to \fig{all34}.

\begin{algorithm}[h]
    \begin{algorithmic}[0]\onehalfspacing
        \State \textbf{Input:} $R \in \bbN$ and $\pphys \in [0,1]$
        \State \textbf{Output:} Correction $\Ehat$.
    \end{algorithmic}
    \begin{algorithmic}[1]\onehalfspacing
        \State Initialize $\bp$ with LLRs equal to $\lambda_q = \log_2{\left((1-\pphys)/\pphys\right)}$.
        \For{$1\leq i \leq R$}
        \State min-sum $\bp$.
        \EndFor
        \For{$q \in Q$}
            \If{ $\lambda_q > 0$}
            \State $\Ehat \leftarrow \Ehat \union \{q\}$
            \EndIf
        \EndFor
        \State $\synd \leftarrow \synd(\Ehat)$
        \State $\Ehat \leftarrow \ssf(\synd)$.
        \State \Return $\Ehat$
    \end{algorithmic}
    \caption{$\bp+\ssf(R, \pphys)$}
    \label{alg:BPSSFNonIt}
\end{algorithm}

In \fig{all34}, we examine the impact of running $\pal$ after this min-sum based $\bp + \ssf$ decoder with $(3,4)$ regular codes.
Using $\pal$ after $\bp+\ssf(R,\pphys)$, we only observe negligible improvements for this code family because the failures were not dominated by stop failures in this case.
It would be interesting to look at other code families (such as codes with larger degree) and other decoders, and observe if $\pal$ offers more significant improvements.

\section{Improving time costs of the SSF decoder}
\label{app:space-time-cost}
In this section, we discuss the cost of na\"ively implementing $\ssf$.
We then demonstrate how some of these costs can be improved.
For simplicity, we only present the time cost associated with decoding $Z$ errors.
The total time cost will be twice of what we present below once we include $X$ errors.

In Algorithm \ref{alg:ssflip-detailed}, we present an implementation of $\ssf$ with details regarding the associated data structures.

\noindent \textbf{Overview:}
 As we explain further below, $\ttt{lookup\_table}$ is simply a table\footnote{$\ttt{lookup\_table}$ is implemented with a heap. Thus ``sorting'' is actually done implicitly at insertion time.} of size $M_Z$ that stores $\genscore(c,v)$ for every generator $(c,v) \in C \times V$.
For each generator $(c,v) \in C \times V$, we have a data structure $\ttt{generator}(c,v)$ that is used to compute the score and store the best subset of qubits to flip in each round.

We first sort $\ttt{lookup\_table}$ to arrange generators in decreasing order of the best score.
We then find the best generator, denoted $(c_i,v_i)$, using the $\ttt{find\_best()}$ method of $\ttt{lookup\_table}$.

The corresponding set of qubits that maximizes the score is denoted $F_i$ and stored in $\ttt{generator}(c_i,v_i)$.
It is obtained by using $\mathtt{generator}(c_i,v_i).\mathtt{best}\_\mathtt{flip}$.
After flipping $F_i$, we recompute scores of the generators that have changed, and reinsert them in $\ttt{lookup\_table}$.
As only a constant number of qubits $F_i$ have been flipped, we only need reinsert a constant number of generators into $\ttt{lookup\_table}$.

\begin{algorithm}
	\begin{algorithmic}[0]
		\State \textbf{Input:} A syndrome $\synd \in \bbF_2^{M}$.
		\State \textbf{Output:} Correction $\Ehat$
    \end{algorithmic}
    \begin{algorithmic}[1]\onehalfspacing
        \For{every $Z$-type generator $(c,v) \in C\times V$} 
        \State $\mathtt{max}$-$\genscore \leftarrow 0$.
            \For{$F \in \supp(c,v)$} \Comment{$2^{\Delta_V + \Delta_C}$ iterations}
                \If{$\genscore(F) > \mathtt{max}$-$\genscore$}
                    \State $\ttt{lookup\_table}(c,v) \leftarrow \genscore(F)$
                \EndIf
            \EndFor
        \EndFor
        \State $\ttt{lookup\_table.sort()}$ \Comment{$O(M \cdot \log(M))$ time}
        \State $\synd_X^{(0)} \leftarrow \synd_X$
		\State $\Ehat \leftarrow \emptyset$
        \State $(c_i,v_i) \leftarrow \ttt{lookup\_table.find\_best()}$ \Comment{$O(1)$ time}
        \State $F_i \leftarrow \mathtt{generator}(c_i,v_i).\ttt{best\_flip()}$.
		\While{$\genscore(F_i) > 0$}
            \State $\Ehat \leftarrow \Ehat \symmdiff F_i \pmod{2}$
            \State $\ttt{updated}$-$\ttt{parity}$-$\ttt{checks} \leftarrow \emptyset$
            \For{Each $X$-type parity check $(\nu,\zeta)$ adjacent to $F_i$}
                \State $\synd_X^{(i+1)}[\nu,\zeta] = \synd_X^{(i)}[\nu,\zeta] \oplus \synd_X(F_i) \pmod{2}$
                \State $\ttt{updated}$-$\ttt{parity}$-$\ttt{checks} \leftarrow \ttt{updated}$-$\ttt{parity}$-$\ttt{checks} \cup (\nu,\zeta)$
            \EndFor
            \For{Each $Z$-type generator $(c,v)$ adjacent to $\ttt{updated}$-$\ttt{parity}$-$\ttt{checks}$}
             \State $\mathtt{max}$-$\genscore \leftarrow 0$.
                \For{$F \in \supp(c,v)$} \Comment{$2^{\Delta_V + \Delta_C}$ iterations}
                    \If{$\genscore(F) > \mathtt{max}$-$\genscore$}
                        \State $\ttt{lookup\_table}(c,v) \leftarrow \genscore(F)$ 
                    \EndIf
                \EndFor
            \EndFor
            \State $i \leftarrow i+1$
            \State $(c_i,v_i) = \ttt{lookup\_table.find\_best()}$ \Comment{$O(1)$ time}
            \State $F_i \leftarrow \mathtt{generator}(c_i,v_i).\mathtt{best}\_\mathtt{flip()}$.
		\EndWhile
		\State \Return $\Ehat$.
	\end{algorithmic}
	\caption{$\ssf(\synd_X$) (Formal)}
	\label{alg:ssflip-detailed}
\end{algorithm}

\subsection{Time complexity}

In this section, we discuss the time costs associated with $\ssf$.
We can form a simple model of the run time of $\ssf$ on a fixed syndrome $\synd_X$ for a quantum expander code of length $N$ encoding $K$ logical qubits as follows:
\begin{eqnarray}
T_{\ssf} &\simeq& T_{\text{init}} + N_{\text{rounds}}\cdot T_{\text{update\_per\_round}}~.
\end{eqnarray}
In words, we can split the contribution to the cost of $\ssf$ from the time to initialize the algorithm, and then the time to perform all iterations.
This can be written as a product of the number of rounds, $N_{\text{rounds}}$, times the time per round $T_{\text{update\_per\_round}}$.

\textbf{1. The initialization costs $T_{\text{init}}$:}
Before each round, the algorithm needs to know the score of each generator.
This then allows it to decide which generator contains the support of the correction to be applied in that round.
The score of all $M_Z$ $Z$-type generators are computed at the start.
These scores are stored in the data structure $\ttt{lookup\_table}$ whose space and time costs we describe shortly.
In our model, we suppose that the subroutine $\ttt{score()}$ which calculates the score of a generator  takes a time $T_{\text{gen-score}}$.
This cycles through the power set of the support and therefore takes time $2^{\Delta_V + \Delta_C}$.
This is a method for the $\ttt{generator}$ class that we also describe below.
Tallying costs, creating the table $\ttt{lookup\_table}$ takes time $N_{\text{generators}}\cdot T_{\text{gen-score}}$.

\textbf{2. Number of rounds $N_{\text{rounds}}$:}
Each round of $\ssf$ strictly reduces the syndrome weight, which is initially $|\synd_X|$. 
Therefore the number of rounds $N_\text{rounds}$ is between zero and $|\synd_X|$, and in our model we assume that it scales proportionally to $|\synd_X|$, i.e., $N_\text{rounds} = c |\synd_X|$ for some constant $0 < c \leq 1$.

\textbf{3. Update time $T_{\text{update\_per\_round}}$:}
We can break down $T_{\text{update\_per\_round}}$ as follows:
\begin{align}
    T_{\text{update\_per\_round}} = \nonumber\\
    &T_{\text{find\_best}} \nonumber\\
    + &T_{\text{update\_deduced\_error}} \nonumber \\
    + &T_{\text{update\_syndromes}} \nonumber \\
    + &T_{\text{update\_scores}}~.
\end{align}
\textbf{3a: $T_{\text{find\_best}}$:}
In each iteration, we flip qubits in the support of the generator with the best score.
Once we have decided to flip the corresponding set of qubits, we need to update the syndromes of the $X$-type parity checks.

The score for each generator is stored in the data structure $\ttt{lookup\_table}$.
The time complexity of $\ssf$ depends on the complexity of querying and updating the data structure $\ttt{lookup\_table}$.
This data structure is unspecified by the algorithm itself.
It is often stated that $\ssf$ is linear time, and what is meant is that the data structure $\ttt{lookup\_table}$ only needs to be queried and updated $O(N)$ times.
Regardless of how the data structure is designed, there is a fundamental trade-off between the time to find the best element using $\ttt{find\_best}$ (line 6) and the time to update elements in each iteration (lines 11-13).

We first describe the data structures $\ttt{lookup\_table}$, the methods associated with it and the corresponding space and time complexity.
This is summarized in Table \ref{tab:datastructs}.

\begin{table*}
    \centering
    \begin{tabular}{|l|c|c|}
        \hline
        \texttt{Method} & Space cost & Time cost\\
        \hline
        {\raggedleft \ttt{init}} & $\Theta(M)$ & $\Theta(M)$ \\
        - \ttt{score()} & $\Theta(1)$ & $2^{\Delta_V + \Delta_C}$\\
        - \ttt{insert} &  $\Theta(1)$  & $\Theta(\log(M))$\\
        \ttt{find\_best()} & $\Theta(1)$ & $\Theta(1)$\\
        \hline
    \end{tabular}\qquad
     \begin{tabular}{|l|c|c|}
        \hline
        \texttt{Method} & Space cost & Time cost\\
        \hline
        \ttt{init} & $\Theta(1)$ & $\Theta(1)$ \\
        \ttt{score()} & $\Theta(1)$ & $\Theta(2^{\Delta_V + \Delta_C})$\\
        \ttt{best\_flip()} &  $\Theta(\Delta_V + \Delta_C)$ & {$\Theta(1)$}\\
        \hline
    \end{tabular}
    \caption{On the left: Data structure \ttt{lookup\_table}, implemented via a heap.
    $\ttt{score()}$ and $\ttt{insert}$ are subroutines of $\ttt{init}$.
    The total time for $\ttt{init}$ is $\Theta[(2^{\Delta_V + \Delta_C} + \log(M))M]$.
    On the right: Data structure \ttt{generator} $(c,v) \in C \times V$.}
    \label{tab:datastructs}
\end{table*}
We use a heap to build the data structure $\ttt{lookup\_table}$.
The time to construct the table is $\Theta(M \log M)$. The time to insert a new element is $\Theta(\log M)$ and the time to find an element is $\Theta(1)$.

\textbf{3b: $T_{\text{update\_deduced\_error}}$:}
This step is simple: we update the vector $\Ehat$ in at most $\Delta_V + \Delta_C$ locations.

\textbf{3c: $T_{\text{update\_syndromes}}$:}
For each of the at most $\Delta_V + \Delta_C$ qubits, we need to update the syndromes of at most $\Delta_V \cdot \Delta_C$ $X$-type parity checks.

\textbf{3d: $T_{\text{update\_scores}}$:}
Each element of $\ttt{lookup\_table}$ is computed using the \ttt{generator} data structure which is described in Table \ref{tab:datastructs}.

The generator data structure stores the best flip corresponding to the generator.
This requires $\Delta_V + \Delta_C$ memory, one bit per qubit in the support of the generator.
As the $\ttt{score()}$ method iterates through every possible flip for qubits in the support of the $Z$-type generator, it also stores the value of the current set of qubits in consideration and the corresponding score.

\subsection{Reducing the run time to score a generator}
\label{sec:graycode}

 The ideas in \sec{graycode} are due to Antoine Grospellier and were used in \cite{grospellier2018numerical}.
We include it here for completeness.

The most costly part of a na\"ive implementation of $\ssf$ is the function $\ttt{score\_gen()}$.
To illustrate, we include the following data from a profiler in Table \ref{tab:gray-before-after}.
It is evident that speedups in $\ttt{score\_gen}$ would mean significant speedups overall.
Grospellier's idea was to use a \emph{Gray} code \cite{doran2007gray}.

To compute the score for a $Z$ generator $(c,v) \in C \times V$, we need to cycle through each subset of $\Gamma_Z(c,v)$.
There are $2^{\Delta_V + \Delta_C}$ such elements which can be indexed by an integer $i \in \{1,...,2^{\Delta_V+\Delta_C}\}$.
To this end, we could use a lexicographic ordering on bits.
The lexicographic ordering of bits is the bijection $\ttt{lex}: \{0,1\}^{\Delta} \leftrightarrow \bbZ$, such that for $\{0,1\}^{\Delta} \ni \textbf{x} = (x_1,...,x_{\Delta})$:
\[
    (x_1,x_2,\cdots, x_{\Delta}) \leftrightarrow  2^{\Delta-1}x_1 + \cdots + 2 x_{\Delta-1} + x_{\Delta} ~.
\]
The obvious way to compute the score might be to use an enumeration algorithm: for $0 \leq x \leq 2^{\Delta_V}$ and $0 \leq y \leq 2^{\Delta_C}$.
The local view $N(c,v)$ is the matrix of syndromes of parity checks adjacent to $(c,v)$, i.e.\
\begin{equation}
    N(c,v) = \begin{pmatrix}
        \synd(\nu_1, \zeta_1) & \cdots & \synd(\nu_1, \zeta_{\Delta_V})\\
        \vdots            &        &   \vdots \\
        \synd(\nu_{\Delta_C}, \zeta_1) & \cdots & \synd(\nu_{\Delta_C}, \zeta_{\Delta_V})
    \end{pmatrix}
\end{equation}
$\genscore$ is a function of the difference in syndromes within this local view.
If we let $\ttt{lex}(x)$ and $\ttt{lex}(y)$, the change in the syndrome by flipping $x$ and $y$ is represented as
\begin{equation}
    \ttt{lex}(x) \cdot N(c,v) \cdot \ttt{lex}(y) - N(c,v)~.
\end{equation}

However, this obvious ordering is wasteful as it performs computations multiple times.
For instance, suppose we have computed the difference in syndrome for some $x$ and $y$.
If we proceed in lexicographic order, the next computation would be
\begin{align}
    \ttt{lex}(x) \cdot N(c,v) \cdot \ttt{lex}(y+1)~.
\end{align}
First, observe that $\ttt{lex}(x) \cdot N(c,v)$ is fixed and need not be recomputed.
We need only keep track of the \emph{change} in the score.
However, the trouble is that $\ttt{lex}(y)$ and $\ttt{lex}(y+1)$ could differ on many locations, i.e.\ the Hamming weight between the two can be quite large.
Although this sounds trivial, this actually leads to many redundant computations.
It would be useful if $\ttt{lex}(y)$ and $\ttt{lex}(y+1)$ differed in only one location as this means that only a single element of $\ttt{lex}(x)\cdot N(c,v)$ changes.
The map $\ttt{lex}$ by itself does not possess this property.

\begin{table}[h]
\centering
\begin{tabular}{|c|c|c|}
\hline
    $x$ & $\ttt{lex}(x)$ & $\ttt{gray}(x)$\\
\hline
    0  & 00 & 00 \\
    1  & 01 & 01 \\
    2  & 10 & 11\\
    3  & 11 & 10 \\
\hline
\end{tabular}
\caption{An example of a Gray code mapping as compared to lexographic ordering for $x \in \{0,...,3\}$.
Successive integers only differ in one location in the Gray code mapping.}
\label{tab:gray}
\end{table}
A Gray code is a total order on binary strings that is different from the lexicographic order.
It allows for a different mapping from bits to integers such that only a single element changes between rounds.
As an example, we contrast the gray code on 2 bits below in Table \ref{tab:gray}.

This leads to a savings in the time spent on computing $\genscore$.
We include an illustrative table below for the $(3,4)$ regular codes below in Table \ref{tab:gray-before-after}.
For the $\dsl 10000, 400\dsr$ code, there is approximately a $10\times$ speedup.
Before this optimization, computing the scores in each round is roughly $90\%$ of the total run time.
It drops to $55\%$ after using the Gray code.

\begin{table*}
    \centering
    \begin{tabular}{|c|c|c|c|c|}
        \hline
        Algorithm      &  number          & Avg.\ run time & Avg.\ run time     & Avg.\ run time per sample\\
                      & score gen calls  & per call (s)   & for all calls (s)  & (s)\\
        \hline
        \hline
        \ttt{$\ssf_0$} &  $1988$ & $2.1 \times 10^{-5}$ & $4.2 \times 10^{-2}$ & $4.6 \times 10^{-2}$ \\
        \hline
        \ttt{$\ssf_1$}  &  $1990$ & $1.7 \times 10^{-6}$ & $3.4 \times 10^{-3}$ & $6.1 \times 10^{-3}$ \\
        \hline
    \end{tabular}
    \caption{Comparison of the costs and error correction performance of an unoptimized implementation $\ssf_0$ and an implementation of $\ssf$, $\ssf_1$, with the Gray code for length 10000 HGP code with $p=0.01$ error rate.}
    \label{tab:gray-before-after}
\end{table*}

\subsection[Reducing the number of calls of the score subroutine]{Reducing the number of calls of the \ttt{score} subroutine}

The na\"ive algorithm is computing the $\genscore$ for generators that \emph{certainly} cannot be the best.
In particular, we can ignore a generator if it is not adjacent to an unsatisfied $X$-type parity check as these generators cannot possibly reduce the score.
Table~\ref{tab:ignore-score-gen} demonstrates the performance benefits by using this observation.

\begin{table}[h]
    \centering
    \resizebox{\columnwidth}{!}{%
    \begin{tabular}{|c|c|c|c|c|}
        \hline
        Algorithm      &  number          & Avg.\ run time \\
                       & score gen calls  & (s)  \\
        \hline
        $\ssf_1$ &  $1988$ & $6.1 \times 10^{-3}$  \\
        \hline
        $\ssf_2$     &  $1369$ & $4.4 \times 10^{-3}$  \\
        \hline
    \end{tabular}%
    }
    \caption{Comparison of the costs and error correction performance of $\ssf_1$ that uses the Gray code and an optimized implementation of SSF, $\ssf_2$, that avoids generators not adjacent to unsatisfied parity checks for a $\dsl 10000, 400\dsr$ code with $\Delta_V=3, \Delta_C=4$ and $p=0.01$ error rate.}
    \label{tab:ignore-score-gen}
\end{table}

For each generator, computing the score takes $\Theta(2^{\Delta_V + \Delta_C})$ time.
This can be reduced if we can restrict the weight of the flip $F$ to size $w$.
In this case, the time complexity would scale in proportion to $\binom{\Delta_V + \Delta_C}{w}$ which can be significantly less than $2^{\Delta_V + \Delta_C}$.
Furthermore, if \emph{most} flips had low weight, it could significantly improve the time spent computing the score.
Theoretical results \cite{leverrier2015quantum,fawzi2018efficient,fawzi2018constant} cannot guide us here.
They only prove that in each iteration of $\ssf$, there exists a generator within whose support we can find \emph{some} set of qubits to reduce the error weight.

Consider a quantum expander code generated from the product of $(3,4)$-biregular graphs $G$.
For a generator $(c,v) \in C \times V$, let $N(c,v)$ denote the local view of the $Z$-type generator and $|N(c,v)|$ denote the weight of the generator, i.e.\ the number of unsatisfied parity checks in the local view.

For $i \in \{1,...,4\}$, let $r_{ba}(i)$ denote the row weight of the $i$\textsuperscript{th} row of $N(c,v)$.
Similarly, for $j \in \{1,...,3\}$, let $c_{ba}(j)$ denote the column weight of the $j$\textsuperscript{th} row of $N(c,v)$.
We call this the $\ttt{Waterfall}$ algorithm.
\begin{algorithm}[h]
\begin{algorithmic}[1]\onehalfspacing
\If{$\exists i \in \{1,...,4\}$ such that $r_{ba}(i) \geq 2$}
    \State $\genscore(c,v) \leftarrow |3-2r_{ba}(i)|$
    \State $F^* \leftarrow \{i\}$
\ElsIf{$\exists j \in \{1,2,3\}$ such that $c_{ba}(j) = 3$}
    \State $\genscore(c,v) \leftarrow |4-2c_{ba}(j)|$
    \State $F^* \leftarrow \{j\}$
\ElsIf{$\exists j \in \{1,2,3\}$ such that $c_{ba}(j) = 2$ $\wedge$ $r_{ba}(i) = 1$}
    \State $\genscore(c,v) = 0.5$
    \State $F^* \leftarrow \{i,j\}$
\Else
    \State $\genscore(c,v) \leftarrow 0$
    \State $F^* \leftarrow \emptyset$
\EndIf
\end{algorithmic}
\caption{$\ttt{Waterfall}$}
\end{algorithm}

The following claim shows that the algorithm works as expected.
\begin{claim}
\label{claim:waterfall}
Let $G$, $N(c,v)$, $r_{ba}$ and $c_{ba}$ be defined as above.
If $|N(c,v)| \leq 3$, it suffices to run Algorithm $\mathtt{Waterfall}$.
\end{claim}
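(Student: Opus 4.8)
The plan is to prove Claim~\ref{claim:waterfall} by an exhaustive case analysis over the possible syndrome patterns inside the local view. Since $N(c,v)$ is a $\Delta_C \times \Delta_V = 4 \times 3$ array and the hypothesis $|N(c,v)| \le 3$ restricts it to at most three unsatisfied entries, only finitely many patterns arise up to permuting the rows and columns, which is a symmetry of the score. For each pattern I would identify the flip $F \subseteq \supp(c,v)$ of maximum score directly and check that it coincides with the output of $\ttt{Waterfall}$; here $F$ ranges over choices of rows ($F_V$) and columns ($F_C$) of $N(c,v)$.

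First I would import the accounting from the proof of Lemma~\ref{lem:line-like}: flipping $F=(F_V,F_C)$ changes the syndrome weight by $|\Lambda^{(u)}(F)\cap\synd| - |\Lambda^{(u)}(F)\setminus\synd|$, and since a single VV qubit toggles a whole length-$\Delta_V$ row of $N(c,v)$ while a single CC qubit toggles a whole length-$\Delta_C$ column, the numerator of the score for a single-row flip is $2r_{ba}(i)-\Delta_V = 2r_{ba}(i)-3$ and for a single-column flip is $2c_{ba}(j)-\Delta_C = 2c_{ba}(j)-4$. These are exactly the values assigned in the first two branches of $\ttt{Waterfall}$.

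The crucial reduction is to bound the support of any beneficial flip. Writing $p=|F_V|$ and $q=|F_C|$, the disjoint-union expression for $\Lambda^{(u)}(F)$ gives $|\Lambda^{(u)}(F)| = \Delta_V p + \Delta_C q - 2pq = 3p+4q-2pq$, and the score numerator equals $2|\Lambda^{(u)}(F)\cap\synd| - |\Lambda^{(u)}(F)|$ with $|\Lambda^{(u)}(F)\cap\synd|\le|N(c,v)|\le 3$. Substituting small $(p,q)$ shows that every flip with $p+q\ge 2$ other than $(p,q)=(1,1)$ already has nonpositive numerator, so a positive score forces $(p,q)\in\{(1,0),(0,1),(1,1)\}$; this collapses the search to a single row, a single column, or one row together with one column. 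For the $(1,1)$ flip the numerator is $2(a+b)-5$, where $a$ and $b$ count the unsatisfied entries of the chosen row and column away from their shared cell; this is positive only when $a+b=3$, which forces the shared cell to be satisfied and $(r_{ba}(i),c_{ba}(j))\in\{(0,3),(2,1),(1,2)\}$. The first two options are dominated by the single-column and single-row flips and are caught by the earlier branches, leaving $(r_{ba}(i),c_{ba}(j))=(1,2)$ as the only regime in which a two-qubit flip is the unique best, with score $1/2$.

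The step I expect to be the main obstacle is this two-qubit branch, because its correctness hinges on a compatibility condition that is only implicit in the pseudocode: the weight-one row must carry its single unsatisfied entry \emph{outside} the chosen weight-two column, so that $a=1,b=2$ rather than $a=0,b=1$. I would verify that such a compatible pair exists precisely in the genuine three-error configuration (two errors sharing a column, plus one isolated error in a fresh row and column) and fails to exist in the degenerate two-error configuration where both errors lie in one column --- in the latter the true optimum is the empty flip of score $0$, so $\ttt{Waterfall}$ must fall through to its final branch rather than fire the third. Finally I would confirm that the \texttt{elif} ordering is consistent with taking the maximum: the bound $|N(c,v)|\le 3$ prevents a weight-$\ge 2$ row from coexisting with a weight-$3$ column (they would require at least four distinct unsatisfied cells), and both single-qubit flips strictly outscore the score-$1/2$ flip, so resolving the branches in the stated priority never discards a better correction.
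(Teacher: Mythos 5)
Your approach is genuinely different from the paper's. The paper enumerates the syndrome configurations of weight $1$, $2$ and $3$ up to row/column permutations and asserts, configuration by configuration, which flips are valid; you instead classify candidate flips by their shape $(p,q)=(|F_V|,|F_C|)$ and bound the score numerator $2|\Lambda^{(u)}(F)\cap\synd|-|\Lambda^{(u)}(F)|$ uniformly. This is a cleaner decomposition, and it makes explicit something the paper's proof leaves implicit, namely why no flip of larger support can beat the single-row, single-column, and row-plus-column candidates. Your observation that the third branch of $\ttt{Waterfall}$ carries an implicit compatibility condition (the weight-one row's unsatisfied entry must lie outside the chosen weight-two column) is correct and matches the paper's prose in its Case 3, which does impose $j_i\neq j$ even though the pseudocode does not.

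However, your crucial reduction step fails as stated. It is not true that every flip with $p+q\ge 2$ other than $(1,1)$ has nonpositive numerator: $f(p,q)=3p+4q-2pq$ is not monotone in $(p,q)$, and for instance $f(3,2)=5$ and $f(3,3)=3$, so a $(3,2)$ flip can have numerator $2\cdot 3-5=+1$ and a $(3,3)$ flip numerator $+3$ whenever all three unsatisfied checks land in the (now small) unique neighborhood. Concretely, with unsatisfied cells at $(1,3),(2,3),(4,1)$, the flip $F_V=\{1,2,3\}$, $F_C=\{1,2\}$ has unique neighborhood $\{(1,3),(2,3),(3,3),(4,1),(4,2)\}$ and strictly positive score $1/5$. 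The conclusion you want still holds, but for a different reason: since $\Lambda^{(u)}(F_V,F_C)=\Lambda^{(u)}(F_V^c,F_C^c)$, every flip of shape $(p,q)$ induces the same syndrome change as its complement of shape $(4-p,3-q)$, so the offending shapes $(3,2),(4,2),(3,3),(4,3)$ are dominated in score by $(1,1),(0,1),(1,0),(0,0)$ respectively and are never the argmax of $\genscore$; among the remaining shapes of weight at most $3$, your computation $f\ge 6$ does rule out a positive numerator. You need to add this complementation argument (or an explicit check of the large shapes) to close the gap. Once that is done, the rest of your analysis --- the $(1,1)$ branch forcing $a+b=3$ with a satisfied shared cell, the domination of $(0,3)$ and $(2,1)$ by single flips, and the consistency of the \texttt{elif} ordering under $|N(c,v)|\le 3$ --- is correct.
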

\begin{proof}
    When the local view satisfies $|N(c,v)| \leq 3$, there are very few cases to test owing to equivalence up to row and column permutations.
    This means that \emph{locally}, i.e.\ from the point-of-view of a generator $(c,v) \in C \times V$, the number of rows and columns flipped is only a function of the weight of the local view $N(c,v)$.
    Each of these scenarios is listed below.
    
   \noindent \textbf{Case 1:} $|N(c,v)| = 1$.

   \noindent In this case, there is no valid flip.
   To prove this, it is sufficient to assume that the $(1,1)$ entry of the matrix $N(c,v)$ is unsatisfied.

    \noindent \textbf{Case 2:} $|N(c,v)| = 2$.

    \noindent If the two unsatisfied parity checks are not in the same row, then, there is no valid flip.
    To prove this, it is sufficient to consider the following: assume that the $(1,1)$ entry is unsatisfied and
    \begin{enumerate}
        \item $(2,1)$ entry is unsatisfied. This shows that if two unsatisfied parity checks are adjacent to the same CC qubit, then no flip exists.
        \item $(2,2)$ entry is unsatisfied. This shows that if two unsatisfied parity checks are not adjacent to the same VV or CC qubit, then no flip exists.
    \end{enumerate}
    If they are in the same row $i \in \{1,...,4\}$, then we flip that row.

    \noindent \textbf{Case 3:} $|N(c,v)| = 3$.

    \noindent If the three unsatisfied parity checks are such that no two of them are in the same row or column, then there is no valid flip.
    There is only one case to consider: let the $(1,1)$, $(2,2)$ and $(3,3)$ entries of $N(c,v)$ be unsatisfied.

    \noindent There are many ways we can find a valid flip.
    \begin{enumerate}
        \item If there are two or more unsatisfied parity checks in a single row $i \in \{1,...,4\}$, then flip row $i$.
        \item If there are two unsatisfied parity checks in one column $j \in \{1,2,3\}$ and a row $i$ such that the $(i,j_i)$ entry is unsatisfied where $j_i \neq j$, then we flip both row $i$ and column $j$.
    \end{enumerate}
\end{proof}

\begin{table}
    \centering
    \resizebox{\columnwidth}{!}{%
    \begin{tabular}{|l|c|c|c|}
    \hline
        Algorithm                      & Calls to & Calls to & Time \\
        &$\ttt{score}\_\ttt{gen}$ & $\ttt{Waterfall}$ & (s)\\
    \hline
    \hline
        $\ssf_2$                    & $5,508$ & $0$ & $9.8 \times 10^{-3}$ \\
        $\ssf_3$ & $220$ & $5,268$ & $7.0 \times 10^{-3}$  \\
    \hline
    \end{tabular}%
    }
    \caption{Number of average calls to $\ttt{score}\_\ttt{gen}$ before and after $\ttt{Waterfall}$ for a code with $10000$ physical qubits $(3, 4)$ codes with $p = 0.01$.
    $\ssf_2$ merely ignores generators not adjacent to unsatisfied parity checks whereas $\ssf_3$ utilizes $\ttt{Waterfall}$.}
    \label{tab:post-waterfall}
\end{table}

\noindent The time improvements offered by $\ttt{Waterfall}$ are shown in Table~\ref{tab:post-waterfall}.
We find a modest improvement in the total time required.

\end{document}